\theoremstyle{plain}
\newtheorem{thm}{Theorem}[section]
\newtheorem{lem}[thm]{Lemma}
\newtheorem{prop}[thm]{Proposition}
\newtheorem{cor}[thm]{Corollary}
\theoremstyle{definition}
\newtheorem{dfn}[thm]{Definition}
\theoremstyle{remark}
\newtheorem{rem}[thm]{Remark}
\theoremstyle{plain}
\newtheorem*{thm*}{Theorem}
\newtheorem*{lem*}{Lemma}
\newtheorem*{prop*}{Proposition}
\newtheorem*{cor*}{Corollary}
\newtheorem*{conj*}{Conjecture}
\theoremstyle{definition}
\newtheorem*{ass*}{Assumption}
\newtheorem*{dfn*}{Definition}
\theoremstyle{remark}
\newtheorem*{rem*}{Remark}
\DeclareMathOperator{\ord}{ord}
\DeclareMathOperator{\lcm}{lcm}
\newcommand{\ie}{\emph{i.e.}\@ifnextchar.{\!\@gobble}{}}
\newcommand{\eg}{\emph{e.g.}\@ifnextchar.{\!\@gobble}{}}
\newcommand{\etc}{etc\@ifnextchar.{}{.\@}}
\title{Cooley-Tukey FFT over $\mathbb{Q}_p$ via Unramified Cyclotomic Extension}
\author{Hiromasa Kondo}
\date{}
\begin{document}
\maketitle
\begin{abstract}
The reason why Cooley-Tukey Fast Fourier Transform (FFT) over $\mathbb{Q}$
can be efficiently implemented using complex roots of unity  
is that the cyclotomic extensions of the completion $\mathbb{R}$ of $\mathbb{Q}$ are at most quadratic,  
and that roots of unity in $\mathbb{C}$ can be evaluated quickly.  
In this paper, we investigate a $p$-adic analogue of this efficient FFT. 
A naive application of this idea—such as invoking well-known algorithms like the Cantor-Zassenhaus algorithm  
or Hensel's lemma for polynomials to compute roots of unity—would incur a cost quadratic in the degree of the input polynomial.  
This would eliminate the computational advantage of using FFT in the first place. 
We present a method for computing roots of unity with lower complexity than the FFT computation itself.  
This suggests the possibility of designing new FFT algorithms for rational numbers.
As a simple application, 
we construct an $O(N^{1+o(1)})$-time FFT algorithm over $\mathbb{Q}_p$ for fixed $p$.
\end{abstract}

\section{Introduction}
Throughout this paper, for any positive integer $s$, we denote by $\zeta_s$ a fixed primitive $s$-th root of unity.  
We also denote by $\Phi_s$ the $s$-th cyclotomic polynomial.  
When we wish to make the indeterminate explicit, we write $\Phi_s(X)$, and so on.  
The ring of $p$-adic integers is denoted by $\mathbb{Z}_p$,  
and the finite field with $q$ elements is denoted by $\mathbb{F}_q$.  
The field of fractions of $\mathbb{Z}_p$ is denoted by $\mathbb{Q}_p$.

The well-known Cooley-Tukey FFT algorithm~\cite{CooleyTukey,ModernComputerAlgebra}  
is an efficient method for computing the discrete Fourier transform (DFT) of a polynomial $f$  
at the points $1, \zeta_M, \zeta_M^2, \ldots, \zeta_M^{M-1}$, where $M$ is a highly composite positive integer.  
If $M$ has the prime factorization $M = p_1^{v_1} p_2^{v_2} \cdots p_n^{v_n}$,  
then the computation requires $O(M(p_1v_1 + p_2v_2 + \cdots + p_nv_n))$ additions and multiplications.  
This technique enables efficient computation of convolution-type operations such as polynomial multiplication,  
and has had significant impact in both science and engineering. 
Moreover, using Bluestein's trick~\cite{Bluestein,PreparataSawate},  
a DFT of arbitrary length $M$ can be reduced to a convolution computation,  
which in many fields—including the complex numbers—can be efficiently reduced to the case where $M$ is highly composite.

In order to perform a Cooley-Tukey FFT, it is necessary that a primitive $M$-th root of unity $\zeta_M$ exist in the base field.  
Alternatively, as suggested in~\cite{Pollard}, one can adjoin such a root if it does not exist.  
However, as pointed out in~\cite{Cantor_Finite_Fields} in the finite field setting,  
if one simply adjoins $\zeta_M$—for example, by choosing $M$ to be a power of $2$—then the degree of the resulting extension becomes comparable to $M$,  
and the total cost of the FFT procedure becomes $\Omega(M^3)$.  

The same issue arises over the field of rational numbers $\mathbb{Q}$.  
Since every cyclotomic polynomial is irreducible over $\mathbb{Q}$,  
the extension $\mathbb{Q}(\zeta_M)/\mathbb{Q}$ has degree $\varphi(M)$,  
so the computational cost of the FFT procedure becomes similarly large.  
From this perspective, one reason FFT over the complex numbers works so efficiently  
is that the cyclotomic extension $\mathbb{R}(\zeta_M)$ of the real numbers $\mathbb{R}$—the completion of $\mathbb{Q}$—has degree at most $2$ over $\mathbb{R}$.

In this paper, we study the $p$-adic analogue of this idea:  
instead of completing $\mathbb{Q}$ to $\mathbb{R}$, we complete it to $\mathbb{Q}_p$,  
and investigate Cooley-Tukey FFTs over $p$-adic cyclotomic extensions $\mathbb{Q}_p(\zeta_M)$  
with small extension degree $d = [\mathbb{Q}_p(\zeta_M): \mathbb{Q}_p]$.  
The most nontrivial aspect of this approach is that $M$-th roots of unity must be efficiently computable,  
which will be discussed in detail in Section~3.


We evaluate the computational complexity in terms of the number of arithmetic operations over $\mathbb{Q}_p$ and its residue field $\mathbb{F}_p$.  
Under this convention, we assume that multiplication in a degree-$d$ extension of $\mathbb{Q}_p$ is performed without using FFT,  
and therefore each multiplication costs $O(d^2)$ operations. 
This paper does not attempt to optimize this part of the computation, as its treatment is largely independent of the main algorithm we aim to present. 

\begin{center}
\includegraphics[width=8cm]{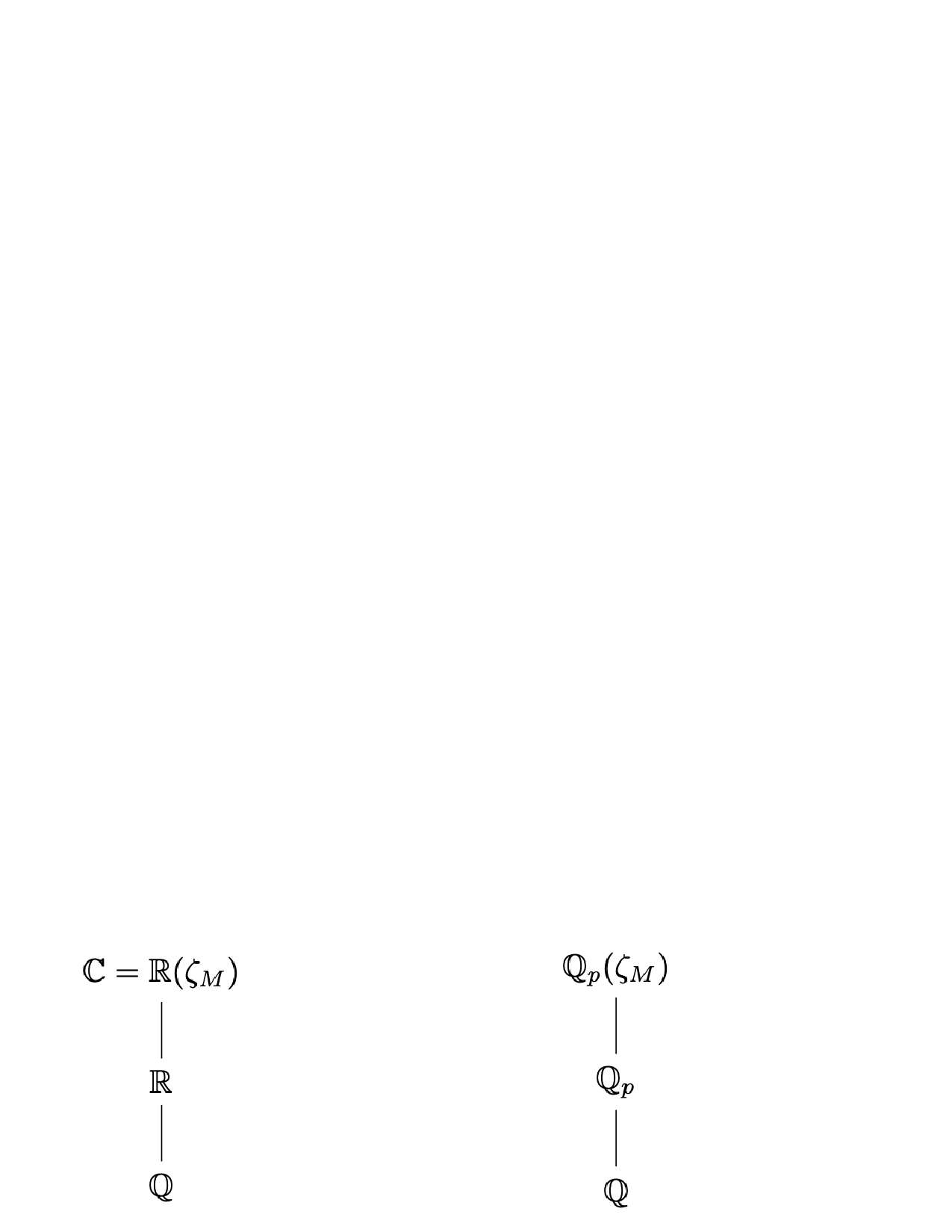}
\end{center}

\subsection{Relation to existing algebraic FFTs}
We describe the relationship between the FFT considered in this paper and existing algebraic variants of the FFT, including the number-theoretic transform, additive FFT, and cyclotomic FFT.
\begin{itemize}

\item
The \emph{number-theoretic transform} (NTT) refers to an FFT over a residue ring of $\mathbb{Z}$.  
The most basic version of NTT performs reduction modulo a prime number~\cite{Pollard}.  
Since the multiplicative group of the field $\mathbb{Z}/p\mathbb{Z}$ is isomorphic to $\mathbb{Z}/(p-1)\mathbb{Z}$,  
if $p - 1$ is divisible by a highly composite integer $M$,  
then a primitive $M$-th root of unity exists in this field and FFT can be performed.  
Our algorithm can be viewed as a $p$-adic refinement of this idea.  
When $M \mid (p - 1)$, i.e., when a primitive $M$-th root of unity $\zeta_M$ exists in $\mathbb{Z}_p$,  
it can be computed efficiently using the simple-root Hensel's lemma (Proposition~\ref{HLiftFast})~\cite{IntegerMultiplication}.  
However, it is unclear how efficiently $\zeta_M$ can be computed when working over low-degree algebraic extensions of $\mathbb{Z}_p$.  
This question is one of the central themes of this paper.

\item
The \emph{additive FFT} is a method for FFT over fields of positive characteristic.  
The term ``additive'' comes from the fact that, while ordinary FFT evaluates at the multiplicative subgroup  
$\{1, \zeta_M, \ldots, \zeta_M^{M-1}\} \subset \mathbb{F}$,  
the additive FFT evaluates over arbitrary finite additive subgroups of $\mathbb{F}$.  
For example, if $\mathbb{F}$ is of characteristic $2$ and contains $\mathbb{F}_{2^{n}} = \{x\in\mathbb{F}\mid x^{2^{n}}-x=0\}$,  
then $\mathbb{F}_{2^{n}}$ forms a finite additive subgroup of $\mathbb{F}$,  
and the values of a polynomial at its elements can be efficiently computed.  
In contrast to the additive FFT, the ordinary FFT is sometimes referred to as the multiplicative FFT.  

Historically, the idea of additive FFT was introduced by Wang and Zhu~\cite{WangZhu} and Cantor~\cite{Cantor_Finite_Fields},  
and has since undergone various improvements.  
For a concise and accessible overview of the algorithm, the paper by Gao and Mateer~\cite{GaoMateer} is highly recommended.  
The history of additive FFT is detailed in Coxon's recent work~\cite{Coxon}.

Our FFT procedure consists of two parts: one over finite fields and the other over $p$-adic fields.  
The finite field part corresponds to a multiplicative FFT.  
On the other hand, additive FFTs crucially rely on the characteristic being positive,  
and it is unclear whether such techniques can be adapted in any meaningful way to the $p$-adic setting.  

\item
The \emph{cyclotomic FFT}~\cite{TrifonovFedorenko} is another FFT method over finite fields.  
Although this method does not offer the best asymptotic complexity,  
it can efficiently compute DFTs of specific short lengths.  
As this technique also seems to rely on the characteristic being positive,  
its connection to our algorithm is uncertain. 

\item 
We also discuss \emph{multipoint evaluation and interpolation}, which generalize the relationship between the FFT and inverse FFT commonly used in polynomial multiplication (see, for example, \cite{MultipointEvaluationInterpolation}).  
In \emph{multipoint evaluation}, the task is to compute the values $f(\alpha_1), \ldots, f(\alpha_M)$ of a given polynomial $f$ at $M$ distinct points $\alpha_1, \ldots, \alpha_M$ (called a multipoint).  
Conversely, in \emph{interpolation}, one is given the values $f(\alpha_1), \ldots, f(\alpha_M)$ and aims to reconstruct the polynomial $f$.

Regarding multipoint evaluation, Kedlaya and Umans developed a theoretically important algorithm that performs asymptotically fast evaluation at arbitrary multipoints over a wide class of fields~\cite{2008, Kedlaya-Umans}.  
While this result is of great theoretical significance in computer algebra, it is known that the algorithm is not efficient for problems of practical size on current hardware~\cite{Revisited}.  
For this reason, we do not rely on their algorithm in this paper.

\end{itemize}

\section{Cyclotomic extensions of $\mathbb{Q}_p$: degrees and valuation structures}
This chapter does not contain any new mathematical results of our own.
Since it seems natural to also investigate the valuation structure of cyclotomic extensions of $\mathbb{Q}_p$, in addition to their degrees of extension, we will describe that as well.
The discussion on the valuation structure is based on Serre \cite{Serre}.
We are not aware of any other suitable references beyond this book,
and since the relevant parts in Serre's book are scattered across a wide range,
making it rather cumbersome to piece them together,
we believe that presenting a summarized account here is worthwhile.

The following lemma is clear from considering how the Frobenius automorphism acts on the set of roots of $\Phi_s$. 
\begin{lem}\label{Fp}
Let $p$ be a prime number and $s$ be a positive integer relatively prime to $p$. 
$$[ \mathbb{F}_p (\zeta_s ) : \mathbb{F}_p] = \ord_s p$$
holds, where $\ord_s p$ is the smallest positive integer $r$ such that $p^r \equiv 1 \pmod s$. 
\end{lem}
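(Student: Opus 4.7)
The plan is to identify $\mathbb{F}_p(\zeta_s)$ as a subfield of some $\mathbb{F}_{p^r}$ and then pin down the smallest such $r$ via the order of $\zeta_s$ in the multiplicative group.

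First I would note that because $\gcd(s,p)=1$, the polynomial $X^s-1$ is separable over $\mathbb{F}_p$, so $\zeta_s$ is separable and $\mathbb{F}_p(\zeta_s)$ is a finite (hence cyclic) extension of $\mathbb{F}_p$; call its degree $r$, so that $\mathbb{F}_p(\zeta_s)=\mathbb{F}_{p^r}$. The key observation is then that $\zeta_s$ has multiplicative order exactly $s$, so $\zeta_s \in \mathbb{F}_{p^j}^{\times}$ iff $s \mid p^j - 1$, i.e.\ iff $p^j \equiv 1 \pmod{s}$.

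The cleaner way to package this, as the hint in the statement suggests, is via the Frobenius. The Galois group $\mathrm{Gal}(\mathbb{F}_p(\zeta_s)/\mathbb{F}_p)$ is generated by $\phi\colon x \mapsto x^p$, and $\phi^j(\zeta_s)=\zeta_s^{p^j}$. The degree of the extension equals the orbit length of $\zeta_s$ under $\langle\phi\rangle$, which is the smallest positive $j$ with $\zeta_s^{p^j}=\zeta_s$, that is, $p^j \equiv 1 \pmod{s}$. By the definition of $\ord_s p$, this minimal $j$ is $\ord_s p$.

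I do not anticipate any real obstacle here: the argument is a direct application of the structure theory of finite fields together with the fact that $\zeta_s$ has order exactly $s$ in the multiplicative group. The only point worth flagging is the use of $\gcd(s,p)=1$ to ensure separability, so that $\zeta_s$ generates an honest Galois extension and the Frobenius argument applies.
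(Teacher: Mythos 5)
Your argument is correct and matches the paper's intended reasoning: the paper dispatches this lemma in one line by noting it is ``clear from considering how the Frobenius automorphism acts on the set of roots of $\Phi_s$,'' which is precisely your orbit-length computation. You have simply filled in the standard details (separability from $\gcd(s,p)=1$, and the degree of $\mathbb{F}_p(\zeta_s)$ being the length of the Frobenius orbit of $\zeta_s$), so no further comment is needed.
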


\begin{dfn}
A \emph{discrete valuation ring (DVR)} is a commutative ring $A$ satisfying the following equivalent conditions:
\begin{itemize}
\item $A$ is a PID, local, and not a field;
\item $A$ is Noetherian, local, and its maximal ideal is generated by a non-nilpotent element.
\end{itemize}
Moreover, if $\pi \in A$ is a generator of the maximal ideal of the discrete valuation ring $A$, then any nonzero element $y \in A \setminus \{0\}$ can be uniquely written in the form $\pi^n u$ for some $n \in \mathbb{Z}_{\geq 0}$ and $u \in A^\times$. This $n$ is called the \emph{valuation} of $y$ and denoted by $v(y)$. For elements $x, y \in A$, we define
\[
|x - y| =
\begin{cases}
\exp(-v(x - y)) & \text{if } x - y \neq 0, \\
0 & \text{if } x - y = 0,
\end{cases}
\]
giving $A$ the structure of a metric space. If $A$ is complete with respect to this metric, we call $A$ a \emph{complete discrete valuation ring}.
\end{dfn}

$\mathbb{Z}_p$ is an example of a complete discrete valuation ring.

\begin{lem}[Hensel lifting]\label{HLift}
Let $A$ be a DVR, and let $\mathfrak{m}$ denote the maximal ideal of $A$.  
Suppose we are given a monic polynomial $h \in A[X]$, monic polynomials $f, g \in A[X]$ of degree at least $1$, polynomials $a, b \in A[X]$, and an integer $k \in \mathbb{Z}_{>0}$ satisfying the following conditions:
\begin{itemize}
\item $h \equiv fg \pmod{\mathfrak{m}^k}$;
\item $af + bg \equiv 1 \pmod{\mathfrak{m}}$.
\end{itemize}

Then there exist $\delta_f, \delta_g \in \mathfrak{m}^kA[X]$, unique modulo $\mathfrak{m}^{k+1}$, satisfying:
\begin{itemize}
\item $\deg \delta_f < \deg f$ and $\deg \delta_g < \deg g$;
\item $h \equiv (f+\delta_f)(g+\delta_g) \pmod{\mathfrak{m}^{k+1}}$;
\item $a(f+\delta_f) + b(g+\delta_g) \equiv 1 \pmod{\mathfrak{m}}$.
\end{itemize}
\end{lem}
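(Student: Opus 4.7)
The plan is to reduce the lifting to solving a single Bézout-type congruence, using that $g$ (and $f$) is monic so Euclidean division works inside $A[X]$.

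First, I set $e := h - fg \in \mathfrak{m}^{k} A[X]$ and observe, by comparing leading terms of the monic polynomials $h$ and $fg$, that $\deg e < \deg f + \deg g$. Expanding $(f+\delta_f)(g+\delta_g) = fg + f\delta_g + g\delta_f + \delta_f\delta_g$ and noting that $\delta_f\delta_g \in \mathfrak{m}^{2k}A[X] \subset \mathfrak{m}^{k+1}A[X]$ (since $k \geq 1$), the congruence $h \equiv (f+\delta_f)(g+\delta_g) \pmod{\mathfrak{m}^{k+1}}$ is equivalent to
\[
    f\delta_g + g\delta_f \equiv e \pmod{\mathfrak{m}^{k+1}}.
\]
The third condition on $a,b$ will hold automatically because $\delta_f,\delta_g \in \mathfrak{m}A[X]$.

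For existence, I would use the hypothesis $af + bg \equiv 1 \pmod{\mathfrak{m}}$ multiplied through by $e \in \mathfrak{m}^{k}A[X]$, giving $f\cdot(ae) + g\cdot(be) \equiv e \pmod{\mathfrak{m}^{k+1}}$. Since $g$ is monic, I perform Euclidean division in $A[X]$: $ae = q_1 g + \delta_g$ with $\deg \delta_g < \deg g$ and $q_1, \delta_g \in \mathfrak{m}^k A[X]$. Substituting gives $f\delta_g + g(be + fq_1) \equiv e \pmod{\mathfrak{m}^{k+1}}$. Put $\delta_f'' := be + fq_1 \in \mathfrak{m}^k A[X]$. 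Now $g\delta_f'' \equiv e - f\delta_g \pmod{\mathfrak{m}^{k+1}}$, and the right side has degree $< \deg f + \deg g$ because both $e$ and $f\delta_g$ do. Since $g$ is monic, this forces the coefficients of $\delta_f''$ in degrees $\geq \deg f$ to lie in $\mathfrak{m}^{k+1}$; setting $\delta_f$ equal to the part of $\delta_f''$ of degree $< \deg f$ yields $\delta_f \in \mathfrak{m}^k A[X]$ satisfying all requirements.

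For uniqueness modulo $\mathfrak{m}^{k+1}$, given two solutions, their differences $u = \delta_g - \delta_g'$ and $v = \delta_f - \delta_f'$ satisfy $fu + gv \in \mathfrak{m}^{k+1}A[X]$ with $u,v \in \mathfrak{m}^k A[X]$ and degree bounds $\deg u < \deg g$, $\deg v < \deg f$. Writing $u = \pi^k u'$, $v = \pi^k v'$ with $\pi$ a uniformizer of $A$, and reducing modulo $\mathfrak{m}$, I obtain $\bar f \bar{u}' + \bar g \bar v' = 0$ in $(A/\mathfrak{m})[X]$; coprimality of $\bar f, \bar g$ (from $\bar a\bar f + \bar b\bar g = 1$) together with the degree bounds forces $\bar u' = \bar v' = 0$, hence $u, v \in \mathfrak{m}^{k+1}A[X]$.

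The one slightly subtle point, which I would single out as the main place requiring care, is the degree-tracking step where $\delta_f''$ is reduced modulo $\mathfrak{m}^{k+1}$ to get a representative of degree $< \deg f$: one must use that $h$ and $fg$ are both monic of the same degree to ensure $\deg e < \deg f + \deg g$, as this is what forces the high-degree coefficients of $g\delta_f''$ (and hence of $\delta_f''$, since $g$ is monic) to land in $\mathfrak{m}^{k+1}$.
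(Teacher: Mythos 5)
Your proof is correct and takes essentially the same route as the paper's: both multiply the B\'ezout relation $af+bg\equiv 1$ by the error $e=h-fg$, use Euclidean division by a monic factor to enforce the degree bounds, and exploit monicity together with the bound $\deg e<\deg f+\deg g$ to push the leftover high-degree part into $\mathfrak{m}^{k+1}$, with the uniqueness argument likewise agreeing in substance. No gaps.
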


\begin{proof}
We first prove the existence.  
Let $q_g$ and $\delta_g$ be the quotient and remainder, respectively, when dividing $a(h-fg)$ by $g$.  
Similarly, let $q_f$ and $\delta_f$ be the quotient and remainder when dividing $a(h-fg)$ by $f$.  
It is clear that $\deg \delta_f < \deg f$ and $\deg \delta_g < \deg g$.  
Since $h - fg \in \mathfrak{m}^k$, we have $\delta_g, \delta_f \in \mathfrak{m}^k$.  
Then,
\begin{align*}
h - (f + \delta_f)(g + \delta_g) 
&\equiv (h - fg) - (f\delta_g + g\delta_f) \pmod{\mathfrak{m}^{k+1}} \\
&\equiv -(af + bg - 1)(h - fg) + fg(q_f + q_g) \\
&\equiv fg(q_f + q_g).
\end{align*}
In the expression $(h-fg)-(f\delta_g+g\delta_f)\equiv fg(q_f+q_g)\pmod{\mathfrak{m}^{k+1}}$,  
the left-hand side has degree at most $\deg f + \deg g - 1$, while $fg$ is a monic polynomial of degree $\deg f + \deg g$.  
Thus, $q_f + q_g$ must belong to $\mathfrak{m}^{k+1}$.  
Therefore, $h - (f+\delta_f)(g+\delta_g) \in \mathfrak{m}^{k+1}$.

We next prove uniqueness.  
Suppose that $\delta_g'$ and $\delta_f'$ also satisfy the same conditions as $\delta_g$ and $\delta_f$.  
Then, since
\[
(f + \delta_f)(g + \delta_g) \equiv (f + \delta_f')(g + \delta_g') \pmod{\mathfrak{m}^{k+1}},
\]
we have
\begin{equation*}\label{F}
f(\delta_g - \delta_g') + g(\delta_f - \delta_f') \equiv 0 \pmod{\mathfrak{m}^{k+1}}.
\end{equation*}
From this,
\begin{align*}
\delta_g - \delta_g' 
&\equiv (af + bg)(\delta_g - \delta_g') \pmod{\mathfrak{m}^{k+1}} \\
&\equiv g\left(b(\delta_g - \delta_g') - a(\delta_f - \delta_f')\right).
\end{align*}
Since $\deg(\delta_g - \delta_g') < \deg g$ and $g$ is monic,  
we must have $b(\delta_g - \delta_g') - a(\delta_f - \delta_f') \equiv 0 \pmod{\mathfrak{m}^{k+1}}$.  
Thus, $\delta_g - \delta_g' \in \mathfrak{m}^{k+1}$, and similarly, $\delta_f - \delta_f' \in \mathfrak{m}^{k+1}$.
\end{proof}

\begin{cor}[Hensel's lemma]\label{HLGeneral}
Let $A$ be a complete DVR, and let $\mathfrak{m}$ denote the maximal ideal of $A$.  
Let $h \in A[X]$ be a monic polynomial.  
We denote the image of $A[X]$ under the natural projection $A[X] \to (A/\mathfrak{m})[X]$ by adding a bar, e.g., $\bar{h}$.  
Suppose there exist coprime monic polynomials $f, g \in (A/\mathfrak{m})[X]$ of degree at least $1$ such that $\bar{h} = fg$.  
Then there uniquely exist polynomials $(\tilde{f}, \tilde{g}) \in A[X]^2$ satisfying the following conditions:
\begin{itemize}
\item $\tilde{f}$ and $\tilde{g}$ are monic, with $\deg \tilde{f} = \deg f$ and $\deg \tilde{g} = \deg g$;
\item $h = \tilde{f} \tilde{g}$;
\item $\bar{\tilde{f}} = f$ and $\bar{\tilde{g}} = g$.
\end{itemize}
\end{cor}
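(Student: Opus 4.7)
The plan is to build $\tilde f$ and $\tilde g$ by iterating Lemma~\ref{HLift}, obtaining a Cauchy sequence of approximate factorizations whose limit exists because $A$ is complete.

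First, I would set up the base case. Since $(A/\mathfrak{m})[X]$ is a PID and $f, g$ are coprime, Bezout gives $\alpha, \beta \in (A/\mathfrak{m})[X]$ with $\alpha f + \beta g = 1$. I would pick any monic lifts $f_0, g_0 \in A[X]$ of $f, g$ of the same degrees (possible since $f, g$ are monic), and any lifts $a, b \in A[X]$ of $\alpha, \beta$. Then $h \equiv f_0 g_0 \pmod{\mathfrak{m}}$ and $a f_0 + b g_0 \equiv 1 \pmod{\mathfrak{m}}$, so the hypotheses of Lemma~\ref{HLift} are satisfied with $k=1$.

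Next, I would construct inductively a sequence of pairs $(f_k, g_k) \in A[X]^2$ with $f_k, g_k$ monic, $\deg f_k = \deg f$, $\deg g_k = \deg g$, satisfying
\[
h \equiv f_k g_k \pmod{\mathfrak{m}^{k+1}}, \qquad a f_k + b g_k \equiv 1 \pmod{\mathfrak{m}},
\]
and $f_{k+1} - f_k,\ g_{k+1} - g_k \in \mathfrak{m}^{k+1} A[X]$. At each step, Lemma~\ref{HLift} provides $\delta_{f_k}, \delta_{g_k}$ of degrees strictly less than $\deg f, \deg g$, so adding them to $f_k, g_k$ preserves monicness and degree. The same $a, b$ can be reused because the compatibility condition is only required modulo $\mathfrak{m}$. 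Since $A$ is $\mathfrak{m}$-adically complete and the coefficients form Cauchy sequences coefficient-wise, the limits $\tilde f = \lim_k f_k$ and $\tilde g = \lim_k g_k$ exist in $A[X]$; they are monic of the correct degrees, reduce to $f, g$ modulo $\mathfrak{m}$, and $h - \tilde f \tilde g$ lies in $\bigcap_k \mathfrak{m}^{k+1} A[X] = 0$, giving $h = \tilde f \tilde g$.

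For uniqueness, suppose $(\tilde f', \tilde g')$ is another factorization meeting the conditions. Note $\bar{\tilde f}' = f$ and $\bar{\tilde g}' = g$, so $a \tilde f' + b \tilde g' \equiv 1 \pmod{\mathfrak{m}}$ automatically. I would show by induction on $k$ that $\tilde f' \equiv \tilde f \pmod{\mathfrak{m}^{k+1}}$ and $\tilde g' \equiv \tilde g \pmod{\mathfrak{m}^{k+1}}$: assuming this holds at level $k$, both $(\tilde f, \tilde g)$ and $(\tilde f', \tilde g')$ arise from $(\tilde f, \tilde g)$ by adding correction polynomials in $\mathfrak{m}^{k+1} A[X]$ of degree less than $\deg f, \deg g$ (using that $\tilde f' - \tilde f$ is a difference of monics of equal degree, hence has strictly smaller degree), and both yield factorizations of $h$ modulo $\mathfrak{m}^{k+2}$. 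The uniqueness clause of Lemma~\ref{HLift} then forces the corrections to agree modulo $\mathfrak{m}^{k+2}$. Completeness ($\bigcap_k \mathfrak{m}^k = 0$) gives $\tilde f = \tilde f'$ and $\tilde g = \tilde g'$.

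The main obstacle is really a bookkeeping one: verifying at each inductive step that degrees and monicness are preserved (so the hypotheses of Lemma~\ref{HLift} are available for the next iteration), and that the uniqueness-mod-$\mathfrak{m}^{k+1}$ from the lemma transfers cleanly to uniqueness of the whole factorization in the limit via completeness.
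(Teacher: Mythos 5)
Your proof is correct and is exactly the derivation the paper intends (the corollary is stated without a written proof, as an immediate consequence of Lemma~\ref{HLift}): lift $f,g$ and the Bezout identity, iterate the lemma to get a coherent sequence of approximate factorizations, pass to the limit using completeness and $\bigcap_k \mathfrak{m}^k = 0$, and derive uniqueness inductively from the uniqueness clause of the lemma. No gaps; the degree/monicness bookkeeping and the reuse of $a,b$ modulo $\mathfrak{m}$ are handled correctly.
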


\begin{prop}
Let $p$ be a prime number and $s$ a positive integer relatively prime to $p$. Then we have
\[
[\mathbb{Q}_p(\zeta_s):\mathbb{Q}_p] = \ord_s p.
\]
\end{prop}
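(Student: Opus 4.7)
The plan is to factor the cyclotomic polynomial $\Phi_s$ over $\mathbb{Q}_p$ by first factoring its reduction modulo $p$, then lifting via Hensel's lemma, and finally checking that the lifted factors are irreducible. Since $\zeta_s$ is a root of $\Phi_s$, its minimal polynomial over $\mathbb{Q}_p$ is one of these irreducible factors, and the claim is that all factors have degree $\ord_s p$.

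First I would observe that because $\gcd(s,p)=1$, the derivative of $X^s-1$ is a unit multiple of $X^{s-1}$ and hence coprime to $X^s-1$ over $\mathbb{F}_p$, so $X^s-1$, and in particular its divisor $\bar{\Phi_s}$, is separable over $\mathbb{F}_p$. Write $\bar{\Phi_s} = \bar{f}_1 \cdots \bar{f}_r$ as a product of distinct monic irreducible factors over $\mathbb{F}_p$. Each root of $\bar{\Phi_s}$ in $\overline{\mathbb{F}_p}$ is a primitive $s$-th root of unity (reducing $\prod_{d\mid s}\Phi_d = X^s-1$ mod $p$ and using separability), hence generates $\mathbb{F}_p(\zeta_s)$, which by Lemma~\ref{Fp} has degree $\ord_s p$ over $\mathbb{F}_p$. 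Therefore every $\bar{f}_i$ has degree exactly $\ord_s p$.

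Next I would iteratively apply Corollary~\ref{HLGeneral}: the factorization $\bar{\Phi_s} = \bar{f}_1 \cdot (\bar{f}_2\cdots\bar{f}_r)$ into coprime monic factors lifts to a factorization $\Phi_s = \tilde{f}_1 \cdot g_1$ in $\mathbb{Z}_p[X]$ with $\tilde{f}_1$ monic of degree $\ord_s p$ and $\bar{g}_1 = \bar{f}_2\cdots\bar{f}_r$. Repeating on $g_1$ yields a factorization $\Phi_s = \tilde{f}_1 \tilde{f}_2 \cdots \tilde{f}_r$ in $\mathbb{Z}_p[X]$ with each $\tilde{f}_i$ monic of degree $\ord_s p$ and $\bar{\tilde{f}}_i = \bar{f}_i$.

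Finally I would verify irreducibility of each $\tilde{f}_i$ over $\mathbb{Q}_p$. If $\tilde{f}_i = gh$ in $\mathbb{Q}_p[X]$ with $g,h$ monic of positive degree, the standard argument (Gauss's lemma, or equivalently the uniqueness clause of Corollary~\ref{HLGeneral} applied to $\tilde{f}_i$ as the lift of $\bar{f}_i = \bar{f}_i \cdot 1$) forces $g,h \in \mathbb{Z}_p[X]$; reducing mod $p$ then produces a nontrivial factorization of the irreducible $\bar{f}_i$, a contradiction. Hence the minimal polynomial of $\zeta_s$ over $\mathbb{Q}_p$ is some $\tilde{f}_i$, of degree $\ord_s p$. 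The main obstacle is being careful in this last step, since Corollary~\ref{HLGeneral} is stated for factorizations into two strictly positive-degree factors; the cleanest route is to invoke the fact that $\mathbb{Z}_p$ is a UFD (so Gauss's lemma applies to the monic $\tilde{f}_i$) rather than to contort Hensel's statement.
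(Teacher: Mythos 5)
Your proof is correct and follows essentially the same route as the paper's: reduce $\Phi_s$ modulo $p$, use Lemma~\ref{Fp} to see that every irreducible factor of $\bar{\Phi_s}$ has degree $\ord_s p$, and lift via Hensel's lemma (Corollary~\ref{HLGeneral}). You simply make explicit several steps the paper leaves terse (separability of $X^s-1$ mod $p$, the iterated lifting of the full factorization, and the irreducibility of the lifted factors via Gauss's lemma), which is fine but not a different argument.
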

\begin{proof}
Let $f$ be a monic irreducible factor of $\Phi_s \in \mathbb{Q}_p[X]$.  
All the coefficients of $f$ lie in $\mathbb{Z}_p$.  
Let $\bar{f}$ and $\bar{\Phi_s}$ denote the images of $f$ and $\Phi_s$ under the natural projection $\mathbb{Z}_p[X] \to \mathbb{F}_p[X]$.  
By Lemma~\ref{Fp}, we have $\deg f \geq \ord_s p$, and hence $[\mathbb{Q}_p(\zeta_s):\mathbb{Q}_p] \geq \ord_s p$.  

Conversely, by Hensel's lemma, any irreducible factor of $\bar{\Phi_s}$ lifts to a factor of $\Phi_s$.  
Therefore, $[\mathbb{Q}_p(\zeta_s):\mathbb{Q}_p] \leq \ord_s p$.
\end{proof}

\begin{prop}\label{Unramified}
Let $A$ be a DVR, $\mathfrak{m}$ its maximal ideal, and $K$ the field of fractions of $A$.  
Suppose that $f \in A[X]$ is a polynomial such that its image $\bar{f} \in (A/\mathfrak{m})[X]$ is irreducible.  
Set $B_f = A[X]/(f)$.  
Then $B_f$ is a DVR whose maximal ideal is $\mathfrak{m}B_f$.  
Moreover, the field of fractions of $B_f$ is $K[X]/(f)$.
\end{prop}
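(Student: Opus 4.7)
The plan is to verify the second characterization of DVR in the definition (Noetherian, local, maximal ideal generated by a non-nilpotent element), with $\mathfrak{m}B_f$ as the candidate maximal ideal. First I would reduce to the case where $f$ is monic: since $\bar f$ is irreducible, its leading coefficient is nonzero in $A/\mathfrak m$, so the leading coefficient of $f$ is a unit in the local ring $A$, and dividing by it does not change the ideal $(f)$. Then $B_f$ is a free $A$-module of rank $\deg f$, which immediately gives Noetherianity and makes $B_f$ finite (hence integral) over $A$.

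Next I would establish that $B_f$ is an integral domain, which reduces to showing that $f$ is irreducible in $K[X]$. Here I would invoke Gauss's lemma, valid because any DVR is a PID and hence a UFD: a nontrivial monic factorization $f=gh$ over $K$ can be taken in $A[X]$, and reducing mod $\mathfrak{m}$ would contradict irreducibility of $\bar f$. Consequently $K[X]/(f)$ is a field; since $B_f$ is $A$-torsion-free, the canonical map $B_f\hookrightarrow K[X]/(f)$ is injective, so $B_f$ is a domain. The quotient $B_f/\mathfrak mB_f\cong(A/\mathfrak m)[X]/(\bar f)$ is a field, so $\mathfrak mB_f$ is maximal. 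To see it is the only maximal ideal, I would apply lying-over: any maximal ideal of $B_f$ contracts to a maximal ideal of $A$, which must be $\mathfrak m$, and hence contains (and by maximality equals) $\mathfrak mB_f$. Finally, a uniformizer $\pi\in A$ generates $\mathfrak mB_f$ and is nonzero in the domain $B_f$, so it is non-nilpotent, completing the DVR verification.

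For the fraction-field statement, $B_f$ sits inside the field $K[X]/(f)$, and every element of $K[X]/(f)$ can be written as $c^{-1}b$ with $b\in B_f$ and $c\in A\setminus\{0\}$ after clearing denominators, so the field of fractions of $B_f$ equals $K[X]/(f)$. The main obstacle I anticipate is the Gauss's lemma step showing $f$ is irreducible over $K$: without it, one cannot rule out pathological factorizations in $K[X]$, and the integrality of $B_f$—and with it the entire DVR structure—would collapse. Once that point is secured, the rest of the argument (the $B_f/\mathfrak m B_f$ computation and lying-over) is routine commutative algebra.
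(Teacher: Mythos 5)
Your proof is correct and reaches the same destination by a partly different route, so it is worth comparing the two. Both arguments verify the second characterization of a DVR and both begin by computing $B_f/\mathfrak{m}B_f \cong (A/\mathfrak{m})[X]/(\bar f)$ to see that $\mathfrak{m}B_f$ is maximal. For locality you invoke lying-over for the integral extension $A \subseteq B_f$, whereas the paper argues that a maximal ideal $\mathfrak{n}$ with $\mathfrak{n} + \mathfrak{m}B_f = B_f$ would contradict Nakayama's lemma; these are interchangeable and equally routine. The more substantive divergence is that you treat irreducibility of $f$ over $K$ (via Gauss's lemma) as the crux, deducing that $B_f$ is a domain and hence that a uniformizer $\pi$ is non-nilpotent. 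The paper dispenses with Gauss entirely: $\pi$ is non-nilpotent already in $A$ and stays non-nilpotent in $B_f$ because $B_f$ is a free $A$-module, and the integrality/irreducibility of $B_f$ then falls out of the equivalence of the two definitions of DVR rather than being proved up front. So the step you flag as the main obstacle is in fact avoidable for the DVR claim; what your detour buys is that $K[X]/(f)$ is visibly a field before you identify it as the fraction field, which the paper leaves implicit.

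One caveat on your monic reduction: irreducibility of $\bar f$ tells you the leading coefficient of $\bar f$ is nonzero, but not that the leading coefficient of $f$ is a unit — if $\deg f > \deg \bar f$, the top coefficient of $f$ lies in $\mathfrak{m}$ and the two leading coefficients are different coefficients of $f$. In that case the statement can genuinely fail: for $f = \pi X^2 + X = X(\pi X + 1)$ one has $\bar f = X$ irreducible, yet $B_f$ is not a domain. The paper's proof tacitly makes the same assumption (it needs $B_f$ finitely generated, indeed free, over $A$), and in every application $f$ is a monic factor of a cyclotomic polynomial, so this is a defect of the statement's hypotheses rather than of your argument; but as written your justification of the reduction conflates the leading coefficient of $f$ with that of $\bar f$, and you should instead simply add the hypothesis $\deg f = \deg \bar f$ (or monicity of $f$).
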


\begin{proof}
We have $B_f/\mathfrak{m}B_f \cong (A/\mathfrak{m})[X]/(\bar{f})$,  
which is a field since $\bar{f}$ is irreducible.  
Thus, $\mathfrak{m}B_f$ is the maximal ideal of $B_f$.
Suppose $\mathfrak{n}$ is a maximal ideal of $B_f$.  
If $\mathfrak{n} \not\supset \mathfrak{m}B_f$, then $\mathfrak{n} + \mathfrak{m}B_f = B_f$.  
Since $B_f$ is a finitely generated $A$-module, this contradicts Nakayama's lemma.  
Hence, $\mathfrak{n} = \mathfrak{m}B_f$, and $B_f$ is a local ring.
Since $\mathfrak{m}$ is non-nilpotent in $A$, it remains non-nilpotent in $B_f$.  
Thus, $B_f$ is Noetherian, local, and has a non-nilpotent maximal ideal; therefore, $B_f$ is a DVR.
Finally, it is clear that the field of fractions of $B_f$ is $K[X]/(f)$.
\end{proof}

Let $f \in \mathbb{Z}_p[X]$ be an irreducible factor of $\Phi_s \in \mathbb{Z}_p[X]$. 
By proposition \ref{Unramified}, $\mathbb{Z}_p[\zeta_s]:=\mathbb{Z}_p[X]/f$ is a DVR with maximal ideal $p\mathbb{Z}_p[\zeta_s]$. 

\begin{center}
\includegraphics[width=8cm]{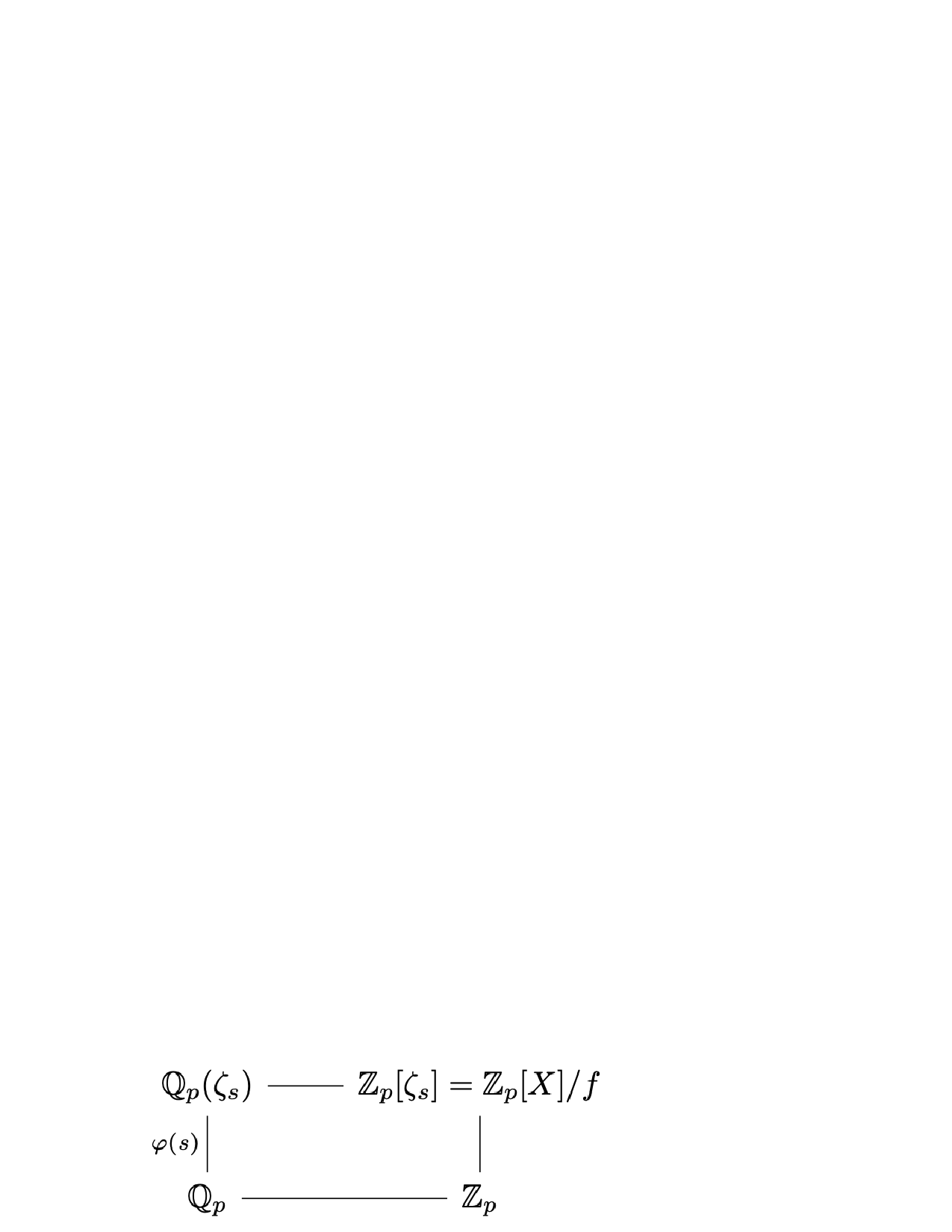}
\end{center}

We next describe fields of the form $\mathbb{Q}_p(\zeta_{sp^n})$ for $n \geq 1$.
\begin{prop}\label{TotallyRamified}
Let $A$ be a DVR, $\mathfrak{m}$ its maximal ideal, and $K$ the field of fractions of $A$.  
Suppose that $f \in A[X]$ is of Eisenstein type, that is, if we write
\[
f = X^n + a_{n-1}X^{n-1} + a_{n-2}X^{n-2} + \cdots + a_0,
\]
then $a_i \in \mathfrak{m}$ for all $i$ and $a_0 \notin \mathfrak{m}^2$.  
(In particular, $f$ is irreducible over $K$.)

Set $B_f = A[X]/(f)$.  
Then $B_f$ is a DVR whose maximal ideal is $XB_f$.  
Moreover, the field of fractions of $B_f$ is $K[X]/(f)$.
\end{prop}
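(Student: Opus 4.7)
The plan is to mirror the strategy used for Proposition~\ref{Unramified}, but with $X$ playing the role of a uniformizer and a careful use of the Eisenstein hypothesis to control the maximal ideal.

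First I would verify that $B_f$ is a domain. Since $f$ is Eisenstein it is irreducible in $K[X]$ (this is the standard consequence quoted in the statement), so $K[X]/(f)$ is a field. Because $f$ is monic, the natural map $A[X]/(f)\hookrightarrow K[X]/(f)$ is injective, giving $B_f$ a domain structure.

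Next, the key computation: in $B_f$ the relation $f(X)=0$ rewrites as
\[
a_0 = -X\bigl(X^{n-1}+a_{n-1}X^{n-2}+\cdots+a_1\bigr),
\]
so $a_0\in XB_f$. Since $a_0\in\mathfrak{m}\setminus\mathfrak{m}^2$ and $A$ is a DVR, $a_0$ generates $\mathfrak{m}$, so $\mathfrak{m}B_f\subseteq XB_f$. On the other hand $B_f/XB_f\cong A[X]/(f,X)=A/(a_0)=A/\mathfrak{m}$, which is a field, so $XB_f$ is maximal. To show $B_f$ is local with this as the unique maximal ideal, I would run the same Nakayama argument as in Proposition~\ref{Unramified}: if $\mathfrak{n}$ is any maximal ideal and $\mathfrak{n}\not\supset\mathfrak{m}B_f$, then $\mathfrak{n}+\mathfrak{m}B_f=B_f$, contradicting Nakayama for the finitely generated $A$-module $B_f$. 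Hence every maximal ideal contains $\mathfrak{m}B_f\subseteq XB_f$, and the maximal ideals of $B_f$ correspond bijectively to maximal ideals of $B_f/\mathfrak{m}B_f\cong(A/\mathfrak{m})[X]/(X^n)$, which is itself local with unique maximal ideal $(X)$.

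To conclude that $B_f$ is a DVR, I would invoke the second equivalent condition in the definition: $B_f$ is Noetherian (being finitely generated over the Noetherian ring $A$), local with maximal ideal $XB_f$, and $X$ is non-nilpotent because $B_f$ is a domain and $X\neq 0$. Finally, for the field of fractions, $K[X]/(f)$ is already a field containing $B_f$, and any element can be written as $s^{-1}b$ with $s\in A\setminus\{0\}$ and $b\in B_f$ by clearing denominators in the coefficients, so $\mathrm{Frac}(B_f)=K[X]/(f)$.

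I expect the only subtle step to be the identification $\mathfrak{m}B_f\subseteq XB_f$; once the Eisenstein relation is used to put $a_0$ inside $XB_f$, everything else is either a direct quotient computation or a repetition of the Nakayama argument from Proposition~\ref{Unramified}.
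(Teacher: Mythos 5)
Your proof is correct and follows essentially the same route as the paper's: compute $B_f/XB_f\cong A/\mathfrak{m}$ to get maximality of $XB_f$, run the Nakayama argument from Proposition~\ref{Unramified} for localness, use the Eisenstein relation $a_0=-X(X^{n-1}+\cdots+a_1)$ to control $X$, and verify non-nilpotence of $X$ (the paper deduces it from $a_0$ being non-nilpotent in $B_f$, you from $B_f$ being a domain; both work). If anything, your version is slightly more complete, since you make explicit the step the paper elides with ``as in Proposition~\ref{Unramified}'': here $\mathfrak{m}B_f$ is not itself maximal, so one must additionally observe that $B_f/\mathfrak{m}B_f\cong(A/\mathfrak{m})[X]/(X^n)$ is local to conclude $\mathfrak{n}=XB_f$.
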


\begin{proof}
It is clear that the field of fractions of $B_f$ is $K[X]/(f)$.
We have $B_f/XB_f \cong A/\mathfrak{m}$, which is a field.  
Thus, $XB_f$ is a maximal ideal of $B_f$.
As in Proposition~\ref{Unramified}, it follows that $B_f$ is a local ring.
Since $a_0$ is a uniformizer of $A$, the element $a_0 \in B_f$ is non-nilpotent.  
Moreover, in $B_f$ we have the relation
\[
a_0 = (-X^{n-1} - a_{n-1}X^{n-2} - \cdots - a_1)X,
\]
which implies that $X \in B_f$ is also non-nilpotent.
Therefore, $B_f$ is a DVR whose maximal ideal is $XB_f$.
\end{proof}

\begin{rem}
In both Proposition~\ref{Unramified} and Proposition~\ref{TotallyRamified},  
the ring $B_f$ is the integral closure of $A$ in $K[X]/f$.  
The extension in Proposition~\ref{Unramified} is unramified,  
while the extension in Proposition~\ref{TotallyRamified} is totally ramified.  
(See \cite{Serre}, Chapter~1, Section~6.)
\end{rem}

The tower of field extensions
\[
\cdots / \mathbb{Q}_p(\zeta_{sp^2}) / \mathbb{Q}_p(\zeta_{sp}) / \mathbb{Q}_p(\zeta_s)
\]
is obtained by repeatedly applying Proposition~\ref{TotallyRamified}.  

Over $\mathbb{Z}_p[\zeta_s]$, the polynomial
\[
\Phi_p(X+1) = X^{p-1} + pX^{p-2} + \cdots + p
\]
is of Eisenstein type, and hence $[\mathbb{Q}_p(\zeta_{sp}):\mathbb{Q}_p(\zeta_s)] = p-1$.  
Moreover, by Proposition~\ref{TotallyRamified}, the ring
\[
\mathbb{Z}_p[\zeta_{sp}] := \mathbb{Z}_p(\zeta_s)[X]/\Phi_p
\]
is a DVR whose maximal ideal is $(\zeta_p - 1)\mathbb{Z}_p[\zeta_{sp}]$.


\begin{center}
\includegraphics[width=8cm]{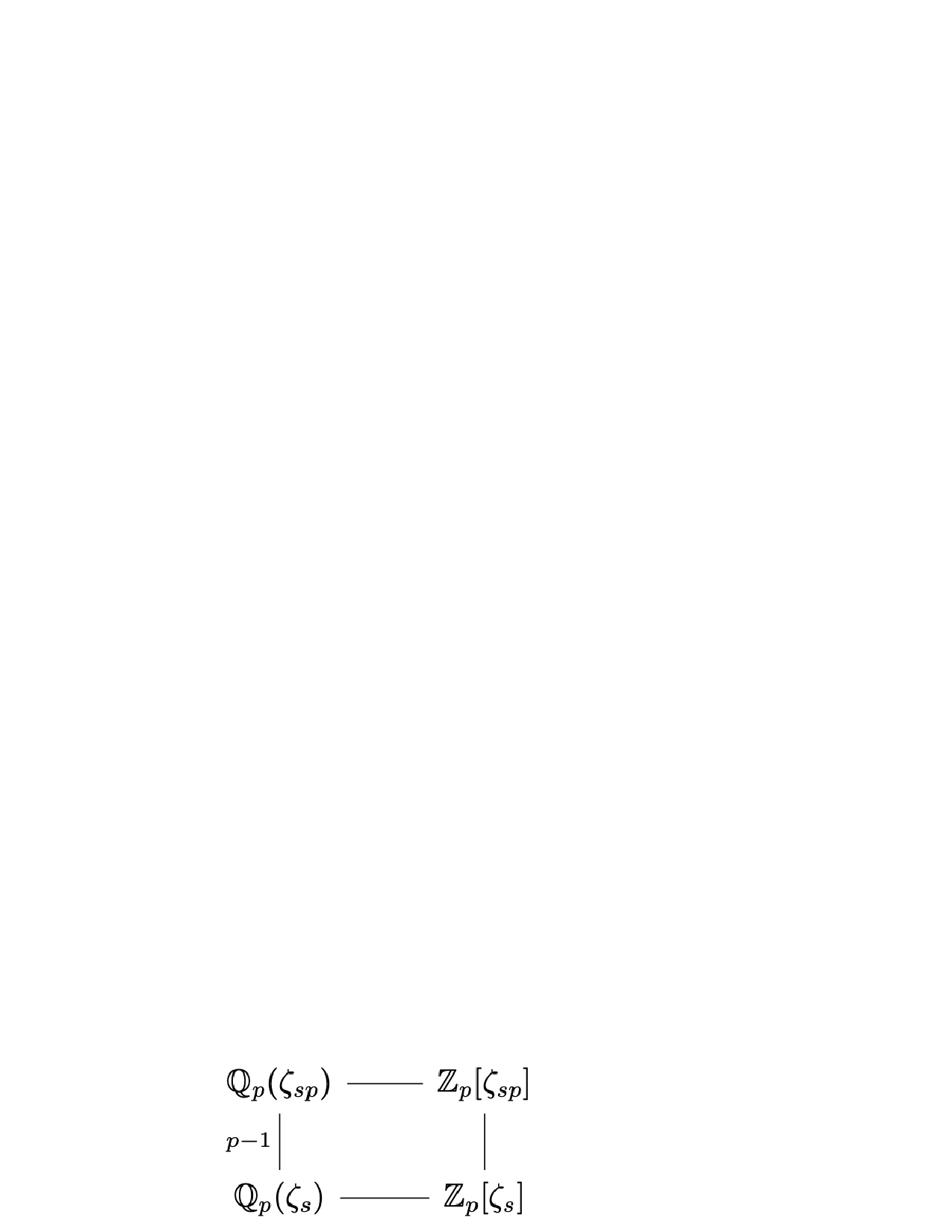}
\end{center}

Fix $n \geq 1$.  
Suppose that $\mathbb{Z}_p[\zeta_{sp^n}]$ is a DVR whose maximal ideal is $(\zeta_{p^n} - 1)\mathbb{Z}_p[\zeta_{sp^n}]$.  
Over $\mathbb{Q}_p(\zeta_{sp^n})$, the polynomial $X^p - \zeta_{p^n}$, after replacing $X$ by $X+1$, becomes
\[
X^p + pX^{p-1} + \cdots + (1 - \zeta_{p^n}),
\]
which is of Eisenstein type.  
Thus, $[\mathbb{Q}_p(\zeta_{sp^{n+1}}) : \mathbb{Q}_p(\zeta_{sp^n})] = p$,  
and by Proposition~\ref{TotallyRamified}, the ring $\mathbb{Z}_p[\zeta_{sp^{n+1}}]:=\mathbb{Z}_p[\zeta_{sp^n}][X]/(X^p - \zeta_{p^n})$  
is a DVR whose maximal ideal is $(\zeta_{p^{n+1}} - 1)\mathbb{Z}_p[\zeta_{sp^{n+1}}]$.  

Therefore, by induction, we conclude that for any $n \geq 1$,  
$[\mathbb{Q}_p(\zeta_{sp^{n+1}}) : \mathbb{Q}_p(\zeta_{sp^n})] = p$.


\begin{center}
\includegraphics[width=8cm]{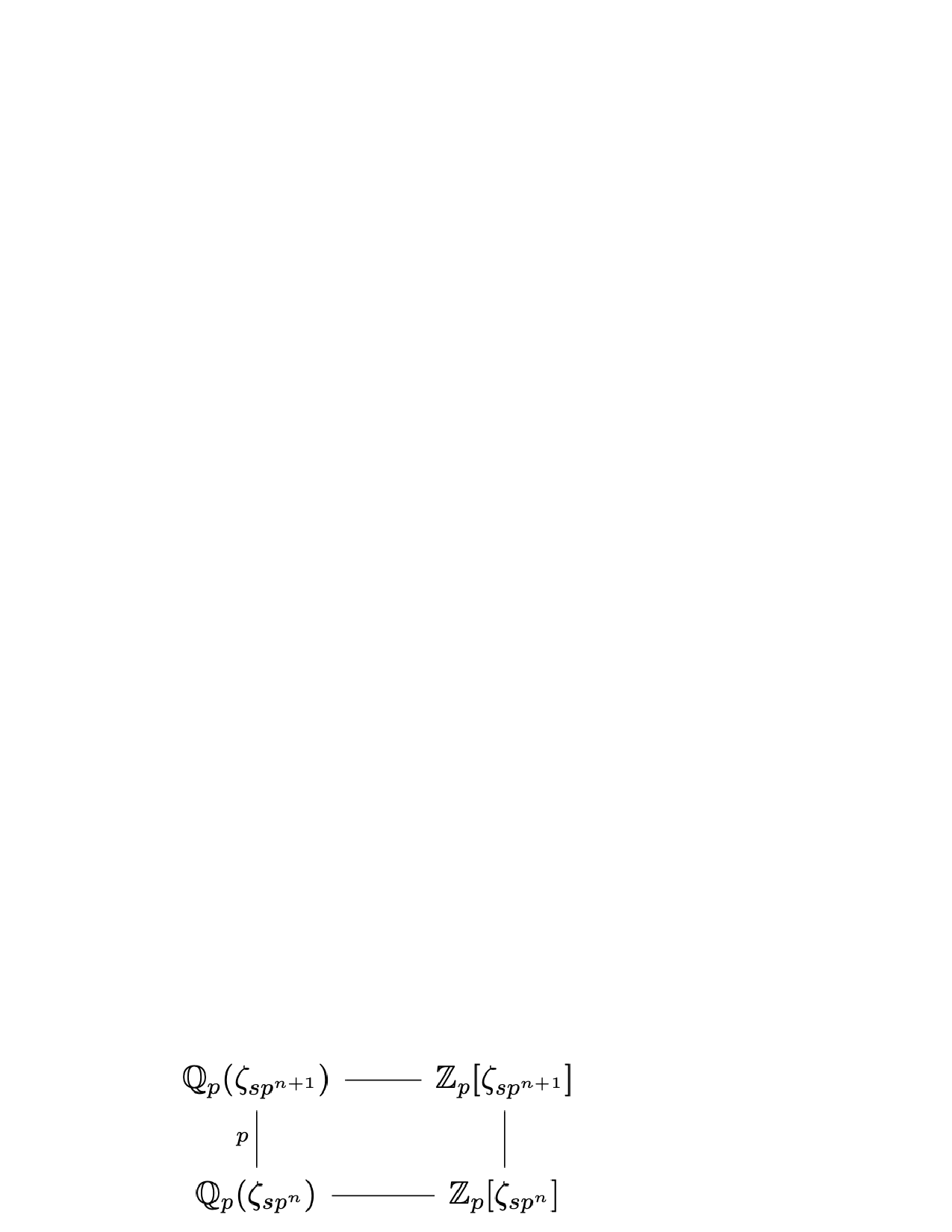}
\end{center}

Thus, the degree of cyclotomic extensions over $\mathbb{Q}_p$ is determined as follows.
\begin{thm}
Let $p$ be a prime number, $n$ a nonnegative integer, and $s$ a positive integer relatively prime to $p$. Then we have
\[
[\mathbb{Q}_p(\zeta_{sp^n}) : \mathbb{Q}_p] = (\ord_s p)\varphi(p^n).
\]
\end{thm}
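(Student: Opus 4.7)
The plan is to combine the ingredients already assembled in this section by invoking the multiplicativity of extension degrees along the tower
\[
\mathbb{Q}_p \subset \mathbb{Q}_p(\zeta_s) \subset \mathbb{Q}_p(\zeta_{sp}) \subset \mathbb{Q}_p(\zeta_{sp^2}) \subset \cdots \subset \mathbb{Q}_p(\zeta_{sp^n}).
\]
First I would dispose of the base case $n=0$: since $\varphi(p^0)=1$, the claimed formula is exactly the content of the proposition preceding this theorem, namely $[\mathbb{Q}_p(\zeta_s):\mathbb{Q}_p] = \ord_s p$.

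For $n \geq 1$, I would read off the degree of each layer of the tower from the paragraphs just before the theorem. The bottom layer has degree $\ord_s p$ by the same proposition. The next layer $\mathbb{Q}_p(\zeta_s) \subset \mathbb{Q}_p(\zeta_{sp})$ has degree $p-1$, via Proposition~\ref{TotallyRamified} applied to the Eisenstein polynomial $\Phi_p(X+1) = X^{p-1} + pX^{p-2} + \cdots + p$ over $\mathbb{Z}_p[\zeta_s]$. Each subsequent layer $\mathbb{Q}_p(\zeta_{sp^k}) \subset \mathbb{Q}_p(\zeta_{sp^{k+1}})$ has degree $p$ by the inductive construction already carried out: assuming $\mathbb{Z}_p[\zeta_{sp^k}]$ is a DVR with uniformizer $\zeta_{p^k}-1$, the polynomial $X^p - \zeta_{p^k}$ becomes Eisenstein after the substitution $X \mapsto X+1$, so Proposition~\ref{TotallyRamified} delivers both the degree and the DVR structure needed to continue the induction. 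Multiplying the degrees of all the layers then yields
\[
[\mathbb{Q}_p(\zeta_{sp^n}) : \mathbb{Q}_p] = (\ord_s p)\cdot(p-1)\cdot p^{n-1} = (\ord_s p)\,\varphi(p^n).
\]

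Since all of the nontrivial work---verifying the Eisenstein condition at each stage and inductively propagating the DVR structure through the tower---has already been carried out in the paragraphs preceding the theorem, there is no remaining obstacle. The only bookkeeping points are to treat $n=0$ separately from the inductive multiplication, and to observe that the formula $\varphi(p^n) = (p-1)p^{n-1}$ (for $n \geq 1$) is what makes the product across the tower collapse into the stated closed form.
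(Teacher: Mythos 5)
Your proposal is correct and matches the paper's own (implicit) argument exactly: the theorem is presented as the immediate consequence of multiplying the degrees $\ord_s p$, $p-1$, and $p$ (repeated $n-1$ times) along the tower, all of which were established in the preceding paragraphs via the unramified case and the Eisenstein/totally-ramified induction. Nothing further is needed.
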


In particular, in order to obtain a root of unity $\zeta_{sp^n}$ with a smaller extension degree,  
it is preferable to choose $n$ as small as possible.  
Moreover, we obtain the following lower bound for the extension degree in terms of the size of $s$:
\[
[\mathbb{Q}_p(\zeta_s) : \mathbb{Q}_p] \geq \log_p (s+1).
\]

\section{Discussion on computation methods and complexity}
In the previous section, we computed the degree of cyclotomic extensions of $\mathbb{Q}_p$.  
However, in order to actually perform Cooley-Tukey FFT, what matters is whether a corresponding root of unity can be explicitly constructed.  
By ``computing'' a root of unity $\zeta$, we mean the following:
\begin{itemize}
  \item an irreducible polynomial $F \in \mathbb{Q}_p[X]$ of the same degree as the minimal polynomial of $\zeta$, such that $\zeta$ belongs to $\mathbb{Q}_p[X]/F$, and
  \item an explicit representation of $\zeta$ in $\mathbb{Q}_p[X]/F$.
\end{itemize}

For example, if $F$ is the minimal polynomial of $\zeta$, then we may represent $\zeta$ by the indeterminate $X$ in $\mathbb{Q}_p[X]/F$.  
In the case of $\mathbb{C}$, the usual expression of elements as $(\text{real part}) + (\text{imaginary part}) \sqrt{-1}$  
corresponds to the choice $F = X^2 + 1$.  
Under this representation, a primitive $3$rd root of unity in $\mathbb{C}$ is expressed as  
$-0.5 + (0.866\ldots)\sqrt{-1}$. 

In this section, we assume the following conditions, which are suitable for performing FFT and are described in an axiomatic manner:
\begin{itemize}
  \item $p \nmid s$,
  \item $s$ is factored as $s = p_1^{v_1} p_2^{v_2} \cdots p_n^{v_n}$ into small prime factors in a form that can be efficiently computed, and
  \item the extension degree $d := [\mathbb{Q}_p(\zeta_s) : \mathbb{Q}_p]$ is small compared to $s$ (say, $d < \sqrt{ \frac{s}{v_1 p_1 + v_2 p_2 + \cdots + v_n p_n} }$).
\end{itemize}

Under these assumptions, we show that the minimal polynomial of $\zeta_s$ over $\mathbb{Q}_p$ can be computed efficiently.  
Specifically, since the FFT algorithm itself involves
$O\bigl(d^2 s (v_1 p_1 + v_2 p_2 + \cdots + v_n p_n)\bigr)$
additions, subtractions, and multiplications over $\mathbb{Q}_p$,  
we will show that the cost of computing the minimal polynomial is strictly smaller than this.
As in the computational content of the previous section,  
the method is to first find the minimal polynomial over $\mathbb{F}_p$ and then lift it to characteristic zero.

There are several possible settings in which this problem can be considered:
one can vary both $p$ and $s$, or vary $s$ while keeping $p$ fixed, or vice versa.  
In any of these settings, determining the most suitable combination of $p$ and $s$ is likely to be a difficult problem.  
In the next section, we present an example of the case where $p$ is fixed and $s$ varies.

\subsection{Computing a root of unity in $\mathbb{F}_p$}
We begin by computing a minimal polynomial of $\zeta_s$ over $\mathbb{F}_p$.
A naive approach to this problem is to first compute the cyclotomic polynomial $\Phi_s$,  
and then apply the equal-degree factorization step from a polynomial factorization algorithm over finite fields.

There is a vast literature on polynomial factorization over finite fields,  
and among them, the most practically significant milestone is the Cantor-Zassenhaus algorithm~\cite{CZ}.  
However, applying this algorithm here would incur a runtime of $\Omega(\deg^2 \Phi_s)$,  
making it unsuitable for our purpose of fast Fourier transform computation.
The fastest known algorithm for equal-degree decomposition is due to Kedlaya and Umans~\cite{Kedlaya-Umans},  
which improves upon earlier work by von zur Gathen and Shoup~\cite{GathenShoup}.  
Its runtime is nearly linear in $\deg \Phi_s$,  
but it is known to be inefficient in practice for problem sizes within the scope of current computing capabilities.

Therefore, in this paper we propose a randomized algorithm that computes the minimal polynomial of $\zeta_s$ in a more suitable way.
The outline of the algorithm is as follows.  
We start with the base field $\mathbb{F}_p$ and gradually extend it step by step as follows:


\begin{center}
\includegraphics[width=8cm]{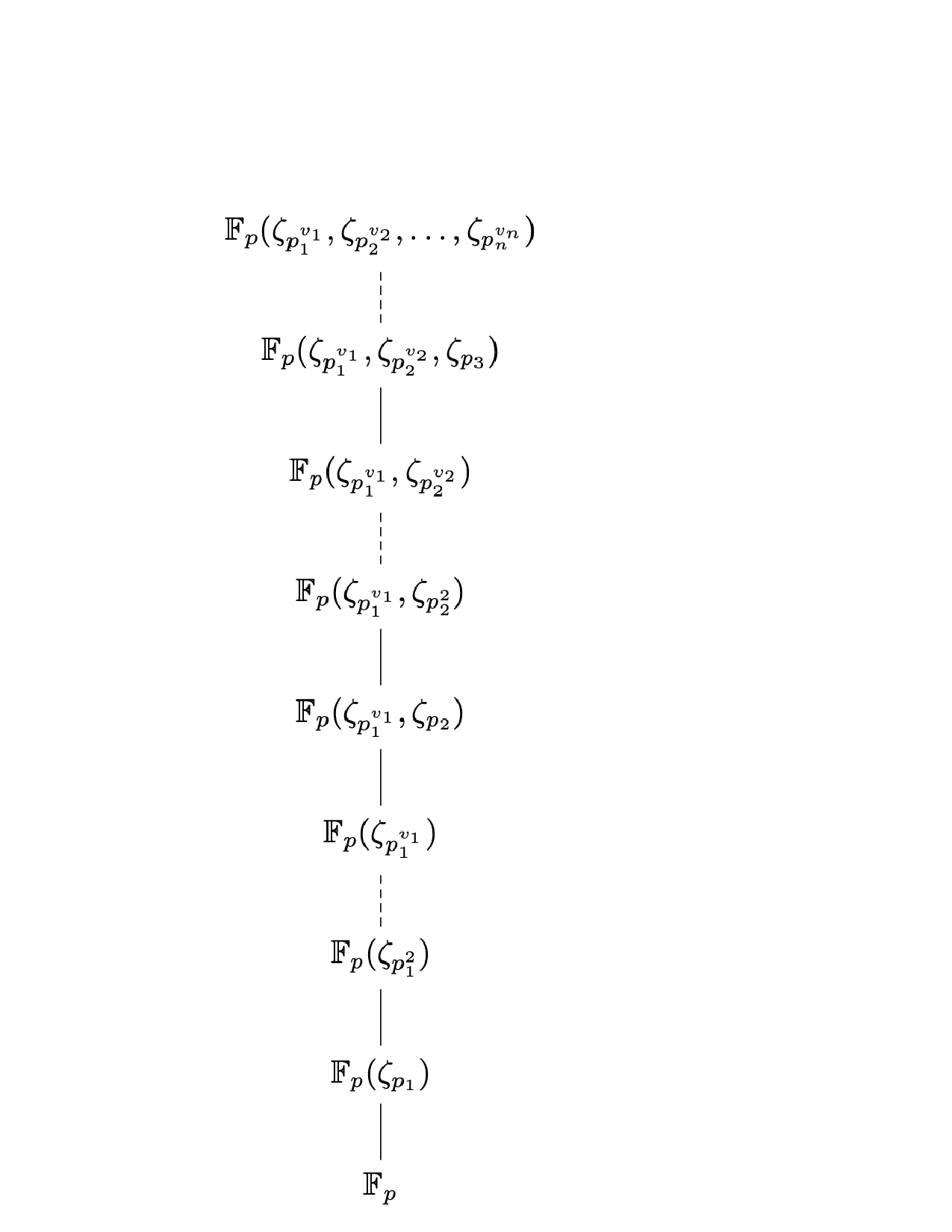}
\end{center}

Since the total extension has degree exactly $d$,  
each intermediate extension can be computed efficiently.
Each such extension is computed using the Cantor-Zassenhaus algorithm.  
That is, when adjoining $\zeta_{p_i}$, we factor the cyclotomic polynomial $\Phi_{p_i}$ to obtain $\zeta_{p_i}$.  
When adjoining $\zeta_{{p_i}^v}$ for $v > 1$, we factor the polynomial
$X^{p_i} - \zeta_{{p_i}^{v-1}}$
to obtain $\zeta_{{p_i}^v}$.

Each extension can be computed in
$O(d^2 p_i^2 \log p_i)$
operations over $\mathbb{F}_p$ (additions, subtractions, and multiplications).  
Therefore, taking all extensions into account, the total cost is
$O\left(d^2 \left(\sum_{i=1}^{n} v_i p_i^2 \log p_i\right)\right)$, 
which is expected to be sufficiently smaller than the FFT core complexity
$O\left(d^2 s \left(\sum_{i=1}^{n} v_i p_i\right)\right)$. 

Since the product $\zeta_{p_1^{v_1}}\zeta_{p_2^{v_2}}\cdots \zeta_{p_n^{v_n}}$ is a primitive $s$-th root of unity,  
the expansion of
$\bigl(X-\zeta_{p_1^{v_1}}\zeta_{p_2^{v_2}}\cdots \zeta_{p_n^{v_n}}\bigr)
\bigl(X-(\zeta_{p_1^{v_1}}\zeta_{p_2^{v_2}}\cdots \zeta_{p_n^{v_n}})^p\bigr)
\cdots
\bigl(X-(\zeta_{p_1^{v_1}}\zeta_{p_2^{v_2}}\cdots \zeta_{p_n^{v_n}})^{p^{d-1}}\bigr)
$
gives the minimal polynomial of $\zeta_s$.

When applying the Cantor-Zassenhaus algorithm to $\Phi_{p_i}$ or $X^{p_i} - \zeta_{{p_i}^{v-1}}$,  
we do not need to compute all irreducible factors.  
Instead, we apply it in the following simplified form. 
The algorithm below works only when $q$ is an odd prime power,  
but in the case where $q$ is a power of $2$, it can be modified using the technique described in~\cite{CZ}.
\begin{algorithm}[H]
\caption{The Cantor-Zassenhaus algorithm}
\begin{algorithmic}[1]
\Require{
$e \in \mathbb{Z}_{>0}$ and $f \in \mathbb{F}_q[X]$ such that $f$ is a product of $\frac{\deg f}{e}$ irreducible polynomials of degree $e$
}
\Ensure{An irreducible factor of $f$}
\State Randomly choose a monic polynomial $g \in \mathbb{F}_q[X]$ such that $0 < \deg g < \deg f$.
\If{$\gcd(g, f)$ is nontrivial}
  \State Replace $f$ with the one of smaller degree between $\gcd(g, f)$ and $f/\gcd(g, f)$.
\ElsIf{$\gcd(g^{\frac{q^e - 1}{2}} - 1, f)$ is nontrivial}
  \State Replace $f$ with the one of smaller degree between $\gcd(g^{\frac{q^e - 1}{2}} - 1, f)$ and $f/\gcd(g^{\frac{q^e - 1}{2}} - 1, f)$.
\EndIf
\If{$\deg f = e$}
  \State \Return $f$
\EndIf
\State Go to step 1.
\end{algorithmic}
\end{algorithm}

In order to execute the Cantor-Zassenhaus algorithm, it is necessary to determine the degree of each field extension.  
This can be done easily by using following propositions: 
\begin{prop}\label{Ext}
Let $p$ and $p_0$ be distinct prime numbers, and let $a$ be a positive integer relatively prime to both $p$ and $p_0$.  
If $p_0 \neq 2$, then the following two equalities hold for any integer $v > 1$:
\begin{align*}
[\mathbb{F}_p(\zeta_a, \zeta_{p_0}) : \mathbb{F}_p(\zeta_a)] &= \frac{\ord_{p_0} p}{\gcd([\mathbb{F}_p(\zeta_a) : \mathbb{F}_p], \ord_{p_0} p)}, \\
[\mathbb{F}_p(\zeta_a, \zeta_{p_0^v}) : \mathbb{F}_p(\zeta_a, \zeta_{p_0^{v-1}})] &=
\begin{cases}
1 & \text{if } v \leq v_{p_0}(p^{\ord_{p_0}p} - 1) + v_{p_0}([\mathbb{F}_p(\zeta_a) : \mathbb{F}_p]), \\
p_0 & \text{if } v > v_{p_0}(p^{\ord_{p_0}p} - 1) + v_{p_0}([\mathbb{F}_p(\zeta_a) : \mathbb{F}_p]).
\end{cases}
\end{align*}
\end{prop}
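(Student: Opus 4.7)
The plan is to reduce both equalities to computations of multiplicative orders furnished by Lemma~\ref{Fp}. Since $\gcd(a, p_0) = 1$, we have $\mathbb{F}_p(\zeta_a, \zeta_{p_0^v}) = \mathbb{F}_p(\zeta_{ap_0^v})$, so Lemma~\ref{Fp} combined with $(\mathbb{Z}/ap_0^v\mathbb{Z})^\times \cong (\mathbb{Z}/a\mathbb{Z})^\times \times (\mathbb{Z}/p_0^v\mathbb{Z})^\times$ gives
\[
[\mathbb{F}_p(\zeta_a, \zeta_{p_0^v}) : \mathbb{F}_p] \;=\; \ord_{ap_0^v} p \;=\; \lcm(\ord_a p,\ \ord_{p_0^v} p).
\]
Both equalities will follow by taking appropriate ratios of such lcms.

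For the first equality, specializing to $v = 1$ and dividing by $[\mathbb{F}_p(\zeta_a):\mathbb{F}_p] = \ord_a p$ yields
\[
[\mathbb{F}_p(\zeta_a, \zeta_{p_0}) : \mathbb{F}_p(\zeta_a)] \;=\; \frac{\lcm(\ord_a p,\ \ord_{p_0} p)}{\ord_a p} \;=\; \frac{\ord_{p_0} p}{\gcd(\ord_a p,\ \ord_{p_0} p)},
\]
which is the stated formula.

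For the second equality, I would compute $\ord_{p_0^v} p$ explicitly using the lift-the-exponent (LTE) lemma, which applies because $p_0$ is odd. Writing $r := \ord_{p_0} p$ and $u := v_{p_0}(p^r - 1) \geq 1$, LTE gives $v_{p_0}(p^{rk} - 1) = u + v_{p_0}(k)$ for every $k \geq 1$. Any integer $m$ with $p_0^v \mid p^m - 1$ must be a multiple of $r$, and minimizing $k = m/r$ subject to $v_{p_0}(k) \geq v - u$ gives $\ord_{p_0^v} p = r \cdot p_0^{\max(0,\, v - u)}$. Since $r \mid p_0 - 1$ we have $\gcd(r, p_0) = 1$, so writing $d_a := [\mathbb{F}_p(\zeta_a):\mathbb{F}_p] = p_0^{e_a} d_a'$ with $\gcd(d_a', p_0) = 1$ and $e_a = v_{p_0}(d_a)$ yields
\[
\lcm\bigl(d_a,\ \ord_{p_0^v} p\bigr) \;=\; p_0^{\max(e_a,\, \max(0,\, v-u))} \cdot \lcm(d_a',\, r).
\]
Dividing this expression for $v$ by the same expression for $v-1$ gives exactly $1$ when $v \leq u + e_a$ and $p_0$ when $v > u + e_a$, matching the proposition.

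The main obstacle is just careful $p_0$-adic bookkeeping at the transition $v = u + e_a + 1$; nothing is conceptually deep once LTE is available. The hypothesis $p_0 \neq 2$ is used precisely because the LTE identity $v_2(p^{rk} - 1) = v_2(p^r - 1) + v_2(k)$ fails in general for $p_0 = 2$ (one must then split cases according to $p \bmod 4$), which is why the proposition is restricted to odd $p_0$.
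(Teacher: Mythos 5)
Your proposal is correct and follows essentially the same route as the paper: both reduce the degrees to $\ord_{ap_0^v}p=\lcm(\ord_a p,\ \ord_{p_0^v}p)$ via Lemma~\ref{Fp}, and both compute $\ord_{p_0^v}p = (\ord_{p_0}p)\,p_0^{\max(0,\,v-v_{p_0}(p^{\ord_{p_0}p}-1))}$ from the lifting-the-exponent lemma for odd $p_0$. Your explicit $p_0$-adic factorization of the $\lcm$ is just a cleaner packaging of the paper's chain of $\gcd$ manipulations, and your remark on why $p_0\neq 2$ is needed matches the paper's separate treatment of that case.
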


\begin{prop}\label{Ext2}
Let $p$ be an odd prime and let $a$ be an odd positive integer relatively prime to $p$.  
Then for any positive integer $v$, the following holds:
\[
[\mathbb{F}_p(\zeta_a, \zeta_{2^v}) : \mathbb{F}_p(\zeta_a, \zeta_{2^{v-1}})] =
\begin{cases}
1 & \text{if } v \leq v_2(p-1) + v_2(p+1) + v_2([\mathbb{F}_p(\zeta_a) : \mathbb{F}_p]) - 1, \\
2 & \text{if } v > v_2(p-1) + v_2(p+1) + v_2([\mathbb{F}_p(\zeta_a) : \mathbb{F}_p]) - 1.
\end{cases}
\]
\end{prop}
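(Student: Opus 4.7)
The plan is to follow the same strategy as in the proof of Proposition~\ref{Ext}, but with the lifting-the-exponent identity replaced by its $p_0=2$ counterpart. The starting point is a closed form for $\ord_{2^v}p$. Applying the $2$-adic LTE identity to $x=p,\ y=1$ (which splits according to the parity of the exponent, since for $p$ odd one has $v_2(p^n-1)=v_2(p-1)$ for $n$ odd and $v_2(p^n-1)=v_2(p-1)+v_2(p+1)+v_2(n)-1$ for $n$ even), minimization of $n$ subject to $v_2(p^n-1)\ge v$ yields
\[
\ord_{2^v}p=
\begin{cases}
1 & \text{if } v\le v_2(p-1),\\
2\cdot 2^{\max\{0,\,v-v_2(p-1)-v_2(p+1)\}} & \text{if } v>v_2(p-1).
\end{cases}
\]

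With this formula in hand I would compute the composite degree exactly as in Proposition~\ref{Ext}. Since $\gcd(a,2^v)=1$, Lemma~\ref{Fp} applied to $\zeta_{a\cdot 2^v}$ gives
\[
[\mathbb{F}_p(\zeta_a,\zeta_{2^v}):\mathbb{F}_p(\zeta_a)]
=\frac{\lcm(\ord_ap,\,\ord_{2^v}p)}{\ord_ap}
=\frac{\ord_{2^v}p}{\gcd(\ord_ap,\,\ord_{2^v}p)}.
\]
Because $\ord_{2^v}p$ is a power of $2$, the gcd collapses to $\min\bigl(\ord_{2^v}p,\,2^{v_2([\mathbb{F}_p(\zeta_a):\mathbb{F}_p])}\bigr)$, so the whole degree equals $\max\bigl(1,\,\ord_{2^v}p/2^{v_2([\mathbb{F}_p(\zeta_a):\mathbb{F}_p])}\bigr)$. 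Taking the ratio of this expression at $v$ and at $v-1$ gives the claimed $1$-or-$2$ dichotomy: until the numerator overtakes $2^{v_2([\mathbb{F}_p(\zeta_a):\mathbb{F}_p])}$ the degree stays at $1$, and from that point on it doubles each step.

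I expect the main obstacle to be the bookkeeping in the case analysis rather than any real conceptual difficulty. The closed form for $\ord_{2^v}p$ itself has three regimes ($v\le v_2(p-1)$; $v_2(p-1)<v\le v_2(p-1)+v_2(p+1)$; $v>v_2(p-1)+v_2(p+1)$), and the truncation by $2^{v_2([\mathbb{F}_p(\zeta_a):\mathbb{F}_p])}$ adds a further split, so in principle one has to verify several subcases. The point to check is that, after telescoping the $2$-adic valuations along these regimes as in the chain of equalities in the proof of Proposition~\ref{Ext}, the boundaries collapse into the single threshold $v_2(p-1)+v_2(p+1)+v_2([\mathbb{F}_p(\zeta_a):\mathbb{F}_p])-1$; the ``$-1$'' arises because the formula for $\ord_{2^v}p$ carries the leading factor $2$ that is pulled outside the outer $\max$.
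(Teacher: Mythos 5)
Your route is the same as the paper's: the $2$-adic lifting-the-exponent identity gives $\ord_{2^v}p=1$ for $v\le v_2(p-1)$ and $\ord_{2^v}p=2\cdot 2^{\max\{0,\,v-v_2(p-1)-v_2(p+1)\}}$ for $v>v_2(p-1)$, and one then divides by $\gcd(\ord_a p,\ord_{2^v}p)$ exactly as in Proposition~\ref{Ext}. Your formula for $\ord_{2^v}p$ and the reduction of $[\mathbb{F}_p(\zeta_a,\zeta_{2^v}):\mathbb{F}_p(\zeta_a)]$ to $\max\bigl(1,\ \ord_{2^v}p/2^{v_2(\ord_a p)}\bigr)$ are both correct. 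The gap sits precisely in the step you defer: the boundaries do \emph{not} collapse into the single threshold $v_2(p-1)+v_2(p+1)+v_2(\ord_a p)-1$. The reason is that $v\mapsto\ord_{2^v}p$ is not ``constant, then doubling each step'': it jumps from $1$ to $2$ at $v=v_2(p-1)+1$, then \emph{plateaus} at the value $2$ throughout $v_2(p-1)<v\le v_2(p-1)+v_2(p+1)$, and only begins doubling afterwards. When $v_2(\ord_a p)\ge 1$ the plateau is absorbed by the gcd and your single-threshold description is accurate; but when $\ord_a p$ is odd and $p\equiv 3\pmod 4$ (so $v_2(p+1)\ge 2$) the sequence of relative degrees is $1,2,1,\dots,1,2,2,\dots$, which no single threshold can describe.

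Concretely, take $p=3$, $a=1$: the threshold is $v_2(2)+v_2(4)+v_2(1)-1=2$, so your conclusion (and the statement) gives degree $1$ at $v=2$, yet $[\mathbb{F}_3(\zeta_4):\mathbb{F}_3(\zeta_2)]=[\mathbb{F}_9:\mathbb{F}_3]=2$; at $v=3$ it gives $2$, yet $\ord_8 3=2=\ord_4 3$, so the actual relative degree is $1$. Hence the ``bookkeeping'' you postpone cannot be completed as stated: either the proposition needs the extra hypothesis that $p\equiv 1\pmod 4$ or that $[\mathbb{F}_p(\zeta_a):\mathbb{F}_p]$ is even (under which your argument does go through), or the case $v_2(\ord_a p)=0$ with $v_2(p+1)\ge 2$ must be split off and handled separately. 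This defect is inherited from the paper's own sketch, which makes the same ``in the same way as Proposition~\ref{Ext}'' leap, so you should flag it rather than reproduce it.
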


Based on Proposition~\ref{Ext} and Proposition~\ref{Ext2},  
we construct the following algorithm.  
Here, CZ stands for the Cantor-Zassenhaus algorithm.

\begin{algorithm}[H]
\caption{}
\begin{algorithmic}[1]
\Require{$p$ and $s=p_1^{v_1}p_2^{v_2}\cdots p_n^{v_n}$ ($v_i\geq 1$)}
\Ensure{An irreducible factor of $\Phi_s\in \mathbb{F}_p[X]$}
\State $F:=\mathbb{F}_p$
\State $a:=1$
\For{$i = 1, \ldots , n$}
  \If{$p_i\neq 2$}
      \State Introduce a fresh indeterminate $X_i$ and apply CZ to $\Phi_{p_i}(X_i)\in F[X_i]$ to compute $F(\zeta_{p_i})$
      \State Set $l:=v_{p_i}(p^{\ord_{p_i}p}-1)+v_{p_i}([F:\mathbb{F}_p])$
      \If{$v_i\leq l$}
        \State Apply CZ $v_i-1$ times to get $\zeta_{{p_i}^{v_i}}\in F(\zeta_{p_i})$
      \Else
        \State Apply CZ $l-1$ times to get $\zeta_{{p_i}^{l}}\in F(\zeta_{p_i})$
        \State Introduce a fresh indeterminate $X_i '$ and compute $F(\zeta_{{p_i}^{v_i}})$\\
        \Comment{$F(\zeta_{{p_i}^{v_i}})=F(\zeta_{{p_i}^l})[X_i ']/({X_i '}^{{p_i}^{v_i -l}}-\zeta_{{p_i}^l})$}
      \EndIf
  \Else
    \State Set $l:=v_2(p-1)+v_2(p+1)+v_2([F:\mathbb{F}_p])-1$
      \If{$v_i\leq l$}
        \State Apply CZ $v_i$ times to get $\zeta_{2^{v_i}}\in F$
      \Else
        \State Apply CZ $l$ times to get $\zeta_{2^{l}}\in F$
        \State Introduce a fresh indeterminate $X_i '$ and compute $F(\zeta_{2^{v_i}})$\\
        \Comment{$F(\zeta_{2^{v_i}})=F(\zeta_{2^l})[X_i ']/({X_i '}^{2^{v_i -l}}-\zeta_{2^l})$}
      \EndIf
  \EndIf
  \State Introduce a fresh indeterminate $Y_i$ and compute the minimal polynomial $f(Y_i)$ of $\zeta_{a{p_i}^{v_i}}$ over $\mathbb{F}_p$\\
  \Comment{$f(Y_i)=\bigl(Y_i-\zeta_a \zeta_{{p_i}^{v_i}}\bigr)\bigl(Y_i-(\zeta_a \zeta_{{p_i}^{v_i}})^p\bigr)\cdots \bigl(Y_i-(\zeta_a \zeta_{{p_i}^{v_i}})^{p^{[F(\zeta_{{p_i}^{v_i}}):F]-1}}\bigr)$}
  \State $F\gets \mathbb{F}_p(\zeta_{a{p_i}^{v_i}}):=\mathbb{F}_p[Y_i]/f(Y_i)$
  \State $a \gets a {p_i}^{v_i}$
\EndFor 

\State \Return $f(Y_n) \in \mathbb{F}_p[Y_n]$
\end{algorithmic}
\end{algorithm}

We now discuss the computational complexity.
The operation on line 24 requires $O(d^3 \log p)$ time to compute
$
\zeta_a \zeta_{{p_i}^{v_i}},\quad (\zeta_a \zeta_{{p_i}^{v_i}})^p,\quad \ldots,\quad (\zeta_a \zeta_{{p_i}^{v_i}})^{p^{[F(\zeta_{{p_i}^{v_i}}) : F] - 1}}.
$
The expansion of
$
\bigl(Y_i - \zeta_a \zeta_{{p_i}^{v_i}}\bigr) \bigl(Y_i - (\zeta_a \zeta_{{p_i}^{v_i}})^p\bigr) \cdots \bigl(Y_i - (\zeta_a \zeta_{{p_i}^{v_i}})^{p^{[F(\zeta_{{p_i}^{v_i}}) : F] - 1}}\bigr)
$
requires $O(d^3 \log d)$ time.
Therefore, the overall computational complexity is approximately
\[
O\left(d^2 \left(\sum_{i=1}^{n} v_i p_i^2 \log p_i\right) + n(d^3 \log p + d^3 \log d)\right).
\]


\subsection{Lifting the root of unity}
This subsection is the most important part of the paper.  
The results of the previous subsection—namely, the multiplicative FFT over finite fields—are unlikely to surpass other FFT methods on their own.  
In this subsection, we describe an algorithm for lifting the minimal polynomial obtained in the previous subsection. 

In this subsection, we do not rely on the assumption that $s$ is a product of small primes. Among the facts we use, the key one is simply that the polynomial $X^s - 1$ is sparse.
While our algorithm shares some ideas with Hubrechts' work on algorithms on unramified extensions of $p$-adic fields~\cite{Hubrechts}, 
a notable difference is our focus on exploiting the sparsity of the polynomial to reduce computational complexity. 

When computing roots of unity over the complex numbers, one can use the identity  
$\zeta_s = \cos\left(\frac{2\pi}{s}\right) + \sqrt{-1} \sin\left(\frac{2\pi}{s}\right)$,  
along with the Maclaurin expansions of $\cos$ and $\sin$ to compute them numerically.  
The situation here is therefore quite different from the complex case.

The lifting procedure is essentially based on Hensel lifting,  
but we present an algorithm that performs significantly faster than executing Lemma~\ref{HLift}.  
We first give a detailed explanation of the general form of Hensel's lemma.
There are two commonly referenced versions of Hensel's lemma.  
One is Corollary~\ref{HLGeneral}, and the other is the following statement.
Mathematically, the proposition below can be interpreted as a special case of Corollary~\ref{HLGeneral}  
when $\deg f = 1$, that is, when $f = X - \alpha$ and $g = h / (X - \alpha)$.

\begin{prop}\label{HLiftFast}
Let $A$ be a complete DVR, and let $\mathfrak{m}$ be its maximal ideal.  
Let $h \in A[X]$ be a monic polynomial, and let $\alpha \in A$ satisfy  
$h(\alpha) \in \mathfrak{m}$ and $h'(\alpha) \notin \mathfrak{m}$.  
Then there exists a unique $\alpha' \in A$ such that  
$h(\alpha') = 0$ and $\alpha' - \alpha \in \mathfrak{m}$.
\end{prop}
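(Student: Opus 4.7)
The plan is to derive Proposition~\ref{HLiftFast} from Corollary~\ref{HLGeneral} along the lines the author sketches, namely by taking the linear factor $X - \alpha$ on one side and $h/(X - \alpha)$ on the other after passing to the residue field. I would then read off the root from the monic degree-one factor produced by the corollary.

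In detail, let $\bar{\alpha}$ denote the image of $\alpha$ in $A/\mathfrak{m}$. From $h(\alpha) \in \mathfrak{m}$ we get $\bar{h}(\bar{\alpha}) = 0$, so $X - \bar{\alpha}$ divides $\bar{h}$ in $(A/\mathfrak{m})[X]$; write $\bar{h} = (X - \bar{\alpha})\bar{g}$ with $\bar{g}$ monic of degree $\deg h - 1$. The only place the simple-root hypothesis really enters is in verifying that $X - \bar{\alpha}$ and $\bar{g}$ are coprime over $A/\mathfrak{m}$, which amounts to $\bar{g}(\bar{\alpha}) \neq 0$. Differentiating the factorization yields $\bar{h}'(\bar{\alpha}) = \bar{g}(\bar{\alpha})$, and the hypothesis $h'(\alpha) \notin \mathfrak{m}$ gives exactly the required nonvanishing. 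Invoking Corollary~\ref{HLGeneral} with $f = X - \bar{\alpha}$ and $g = \bar{g}$, I obtain unique monic lifts $\tilde{f}, \tilde{g} \in A[X]$ with $h = \tilde{f}\tilde{g}$, $\deg \tilde{f} = 1$, and $\bar{\tilde{f}} = X - \bar{\alpha}$. Writing $\tilde{f} = X - \alpha'$ furnishes an $\alpha' \in A$ with $h(\alpha') = 0$ and $\alpha' \equiv \alpha \pmod{\mathfrak{m}}$.

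For uniqueness, suppose some $\alpha'' \in A$ also satisfies $h(\alpha'') = 0$ and $\alpha'' \equiv \alpha \pmod{\mathfrak{m}}$. Then $(X - \alpha'',\, h/(X - \alpha''))$ is a pair of monic polynomials in $A[X]$ whose product is $h$ and whose reductions modulo $\mathfrak{m}$ are $(X - \bar{\alpha}, \bar{g})$; the uniqueness clause in Corollary~\ref{HLGeneral} forces $X - \alpha'' = \tilde{f} = X - \alpha'$, so $\alpha'' = \alpha'$. There is no real obstacle here — the only subtlety is the translation of the derivative condition into the coprimality hypothesis of Corollary~\ref{HLGeneral}; everything else is a direct application.
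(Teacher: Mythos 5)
Your argument is correct, but it is not the proof the paper gives, and the difference matters for the paper's purposes. You deduce the proposition from Corollary~\ref{HLGeneral} by factoring $\bar{h} = (X - \bar{\alpha})\bar{g}$ over the residue field, using $\bar{h}'(\bar{\alpha}) = \bar{g}(\bar{\alpha}) \neq 0$ to get coprimality, and reading the root off the lifted linear factor; the uniqueness transfer is also handled correctly. The paper instead proves the proposition directly by Newton iteration: from $h(\beta) \in \mathfrak{m}^k$ and $h'(\beta) \notin \mathfrak{m}$, the Taylor expansion of $h(\beta + X)$ at $X = -h(\beta)/h'(\beta)$ shows the update $\beta \mapsto \beta - h(\beta)/h'(\beta)$ doubles the precision, and completeness gives the limit. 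Your route is shorter and reuses established machinery, and the author even remarks that the proposition is ``mathematically'' the $\deg f = 1$ case of Corollary~\ref{HLGeneral}; but the surrounding text stresses that the two standard proofs differ decisively in computational cost, and the entire lifting algorithm of Section~3.2 is extracted from the Newton-iteration proof (quadratic convergence, no polynomial division by $g$ of large degree). So your derivation establishes the statement but discards exactly the constructive content the paper needs. Two small technical points if you keep your route: Corollary~\ref{HLGeneral} as stated requires both factors to have degree at least $1$, so the case $\deg h = 1$ (where $\bar{g}$ is constant) must be disposed of separately, though it is trivial there; and you should note explicitly that reduction modulo $\mathfrak{m}$ commutes with formal differentiation, which is what lets you pass from $h'(\alpha) \notin \mathfrak{m}$ to $(\bar{h})'(\bar{\alpha}) \neq 0$.
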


These two versions of Hensel's lemma have distinct standard proofs.  
In particular, this distinction leads to a decisive difference in computational complexity.  
Proposition~\ref{HLiftFast} can be proved as follows.

\begin{proof}
Let $k$ be a positive integer, and suppose that some $\beta \in A$ satisfies $h(\beta) \in \mathfrak{m}^k$ and $h'(\beta) \notin \mathfrak{m}$.  
Since $h$ is a polynomial, we can expand it as 
\begin{equation*}
h(\beta + X) = h(\beta) + \frac{h'(\beta)}{1!} X + \frac{h''(\beta)}{2!} X^2 + \cdots.
\end{equation*}  
By substituting $X = -\frac{h(\beta)}{h'(\beta)}$, we obtain  
$h\left(\beta - \frac{h(\beta)}{h'(\beta)}\right) \in \mathfrak{m}^{2k}$  
and  
$h'\left(\beta - \frac{h(\beta)}{h'(\beta)}\right) \notin \mathfrak{m}$. 
Therefore, starting with $\beta = \alpha$, we can iteratively update $\beta$ at the $k$-th step by replacing it with  
$\beta - \frac{h(\beta)}{h'(\beta)} \mod \mathfrak{m}^{2^k + 1}$  
to lift the root.
Uniqueness follows by the same argument.
\end{proof}

This proof is essentially Newton's method, and it lifts the root with quadratic convergence.  
By contrast, the lifting procedure in Lemma~\ref{HLift} has two disadvantages:
\begin{itemize}
  \item Only one digit (or one level of precision) is obtained per step, and
  \item If multiplication and reduction are performed without FFT, each step incurs a computational cost of about $\Omega(\deg^2 f + \deg^2 g)$ arithmetic operations over $A$.
\end{itemize}
In this respect, Proposition~\ref{HLiftFast} is superior.

\begin{rem}
The lifting method in Lemma~\ref{HLift} is referred to as \emph{linear lifting}.  
There exists a method called \emph{quadratic lifting}, which achieves higher precision in a single step.  
However, it requires an additional cost of approximately $\Omega(\deg^2 f + \deg^2 g)$ per step,  
which is too expensive for our purposes.
\end{rem}

This very fast lifting method is restricted to the case where $\deg f = 1$,  
but we extend it to cases where $\deg f \ll \deg h$ to enable efficient computation of roots of unity.

The rough idea is as follows. Let $h = \Phi_s$.  
Instead of computing some irreducible factor $\tilde{f}$ such that $\Phi_s = \tilde{f} \tilde{g}$ over the DVR $\mathbb{Z}_p$  
(by using Corollary~\ref{HLGeneral}),  
we select a polynomial $F \in \mathbb{Z}_p[X]$ that agrees with $\tilde{f}$ modulo $p$,  
and apply Proposition~\ref{HLiftFast} to the root $X \in \mathbb{Z}_p/F$ of $\Phi_s \bmod p$  
over the DVR $\mathbb{Z}_p[X]/F$.  
That this ring is a DVR follows from Proposition~\ref{Unramified}.

In other words, we apply Proposition~\ref{HLiftFast} with  
$A = \mathbb{Z}_p/F$, $\mathfrak{m} = pA$, $h = \Phi_s$, and $\alpha = X$.

Of course, there is some freedom in the choice of $F$.  
For example, $\Phi_5$ factors as $(X^2 + 5X + 1)(X^2 + 15X + 1)$ over $\mathbb{F}_{19}$.  
If we take $f = X^2 + 5X + 1$, then possible choices for $F \in \mathbb{Z}_p[X]$ include:
\begin{itemize}
  \item $X^2 + 5X + 1$,
  \item $X^2 - 14X - 18$, and
  \item $X^2 + (5 + 1 \cdot 19 + 1 \cdot 19^2 + 1 \cdot 19^3 + \cdots) X + 1$.
\end{itemize}
Any of these choices are valid.  
However, from the perspective of computational cost, it is better to choose an $F$ such that reduction modulo $F$ is computationally efficient. 

To formalize this idea, we rely on the following result.

\begin{thm}
Let $h \in \mathbb{Z}_p[X]$ be a monic polynomial, and let $f \in \mathbb{F}_p[X]$ be a monic irreducible polynomial of degree at least one.  
Let $\bar{h}$ denote the image of $h$ in $\mathbb{F}_p[X]$, and suppose that $f$ divides $\bar{h}$ but $f^2$ does not.  
(In this case, by Corollary~\ref{HLGeneral}, $f$ lifts to a monic factor of $h$ over $\mathbb{Z}_p$.)

Let $F \in \mathbb{Z}_p[X]$ be any polynomial whose image in $\mathbb{F}_p[X]$ equals $f$. Then the following statements hold:
\begin{itemize}
  \item $A := \mathbb{Z}_p / F$ is a complete DVR with maximal ideal $pA$.
  \item Let $\bar{X}$ denote the image of $X$ in $A/pA$. Then the expansion of  
  $(Y - \bar{X})(Y - \bar{X}^p) \cdots (Y - \bar{X}^{p^{\deg f - 1}})$  
  in $(A/pA)[Y]$ coincides with $f(Y)$.
  \item For each $i = 0, 1, \ldots, \deg f - 1$, we have $\bar{h}'(\bar{X}^{p^i}) \neq 0$.
  \item Let $X_i'$ denote the result of applying Proposition~\ref{HLiftFast} with $A = \mathbb{Z}_p / F$, $h = h$, and $\alpha = X^{p^i}$ (for $i = 0, 1, \ldots, \deg f - 1$).  
  Then the expansion of  
  $(Y - X_0')(Y - X_1') \cdots (Y - X_{\deg f - 1}')$  
  lies in $\mathbb{Z}_p[Y]$ and equals $\tilde{f}(Y)$,  
  the lifted polynomial obtained via Corollary~\ref{HLGeneral} with $A = \mathbb{Z}_p$, $h = h(Y)$, $f = f(Y)$, and $g = \bar{h}(Y)/f(Y)$.
\end{itemize}
\end{thm}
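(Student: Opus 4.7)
The plan is to dispatch the four bullets in turn, with the work concentrated in the fourth. The first bullet follows immediately from Proposition~\ref{Unramified} applied with base DVR $\mathbb{Z}_p$ and maximal ideal $p\mathbb{Z}_p$: since $f$ is irreducible by hypothesis, $A$ is a DVR with maximal ideal $pA$. Completeness is automatic because $A$ is free of rank $\deg f$ over the complete DVR $\mathbb{Z}_p$, and the $pA$-adic topology coincides with the product topology on $\mathbb{Z}_p^{\deg f}$.

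For the second bullet, I would identify $A/pA$ with $\mathbb{F}_p[X]/f$, in which $\bar{X}$ is manifestly a root of $f$. Since $\mathbb{F}_p$ is perfect, $f$ is separable, and the Galois group of $\mathbb{F}_p[X]/f$ over $\mathbb{F}_p$ is cyclic, generated by the Frobenius $y \mapsto y^p$. Hence the conjugates of $\bar{X}$ are exactly $\bar{X}, \bar{X}^p, \ldots, \bar{X}^{p^{\deg f - 1}}$, they are pairwise distinct, and $f(Y) = \prod_{i=0}^{\deg f - 1}(Y - \bar{X}^{p^i})$ follows. For the third bullet, I would write $\bar{h} = fg$ in $\mathbb{F}_p[X]$ with $g := \bar{h}/f$; irreducibility of $f$ combined with $f^2 \nmid \bar{h}$ yields $\gcd(f, g) = 1$, and at any root $\alpha$ of $f$ one computes $\bar{h}'(\alpha) = f'(\alpha) g(\alpha) \neq 0$ (neither factor vanishes: the former by separability, the latter by coprimality).

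For the fourth and main bullet, I would first apply Corollary~\ref{HLGeneral} to $(f, g)$ to obtain a factorization $h = \tilde{f}\tilde{g}$ in $\mathbb{Z}_p[Y]$ with $\tilde{f}$ monic of degree $\deg f$. Each $X_i'$ is a root of $h$ in the domain $A$, so $\tilde{f}(X_i')\tilde{g}(X_i') = 0$; reducing modulo $pA$ one gets $\tilde{g}(X_i') \equiv g(\bar{X}^{p^i}) \neq 0$, so $\tilde{g}(X_i')$ is a unit in $A$, which forces $\tilde{f}(X_i') = 0$. The $X_i'$ are pairwise distinct because their reductions $\bar{X}^{p^i}$ already are. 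Since $\tilde{f}$ is monic of degree $\deg f$ with $\deg f$ distinct roots in the domain $A$, successive factoring yields $\tilde{f}(Y) = \prod_{i=0}^{\deg f - 1}(Y - X_i')$ in $A[Y]$, and since $\tilde{f} \in \mathbb{Z}_p[Y]$, this expansion lies in $\mathbb{Z}_p[Y]$ as claimed.

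The main obstacle I anticipate is the last step's need to show that an expansion computed a priori in $A[Y]$ in fact has coefficients in the subring $\mathbb{Z}_p$; I plan to bypass this by instead matching the expansion against the specific lift $\tilde{f}$ furnished by Corollary~\ref{HLGeneral}, using uniqueness of root decomposition in the domain $A$.
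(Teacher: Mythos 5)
Your proposal is correct. Bullets one through three are handled essentially as in the paper (your third bullet replaces the paper's proof by contradiction with the direct computation $\bar{h}'(\alpha)=f'(\alpha)g(\alpha)$ at a root $\alpha$ of $f$, which is equivalent). The genuine divergence is in the fourth bullet. The paper first observes that $\prod_i (Y-X_i')$ is a monic factor of $h$ over the larger DVR $A$ reducing to $f$ modulo $pA$, and then identifies it with $\tilde{f}$ by invoking the \emph{uniqueness} clause of Corollary~\ref{HLGeneral} applied with base ring $A$ (where $\tilde{f}$, viewed in $A[Y]$, is also the unique lift). You instead argue pointwise: from $h=\tilde{f}\tilde{g}$ and $h(X_i')=0$ you show $\tilde{g}(X_i')$ is a unit of the local ring $A$ because its reduction $g(\bar{X}^{p^i})$ is nonzero, conclude $\tilde{f}(X_i')=0$, and then recover $\tilde{f}=\prod_i(Y-X_i')$ by successively splitting off the $\deg f$ distinct roots in the domain $A$. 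Your route avoids re-running Hensel uniqueness over $A$ (and in particular does not need to check that $f$ and $\bar{h}/f$ remain coprime over the residue field $A/pA$), at the cost of the small extra observation that $\tilde{g}(X_i')$ is a unit; the paper's route is shorter but leaves implicit the fact that $\prod_i(Y-X_i')$ actually divides $h$ in $A[Y]$, which your factoring argument supplies explicitly. Both are sound, and your closing remark correctly identifies why matching against the specific lift $\tilde{f}$ disposes of the integrality of the coefficients.
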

\begin{proof}
That $A$ is a DVR with maximal ideal $pA$ follows from Proposition~\ref{Unramified}.  
That $A$ is complete follows immediately from the completeness of $\mathbb{Z}_p$.

Since $\bar{X} \in A/pA$ is a root of $f(Y) \in \mathbb{F}_p[Y] \subset A/pA[Y]$,  
the elements $\bar{X}^p, \bar{X}^{p^2}, \ldots, \bar{X}^{p^{\deg f - 1}}$ are also roots.  
These elements $\bar{X}, \bar{X}^p, \ldots, \bar{X}^{p^{\deg f - 1}}$ are all distinct.  
As the degree of $f$ equals that of the product  
$(Y - \bar{X})(Y - \bar{X}^p) \cdots (Y - \bar{X}^{p^{\deg f - 1}})$,  
we conclude that  
$f(Y) = (Y - \bar{X})(Y - \bar{X}^p) \cdots (Y - \bar{X}^{p^{\deg f - 1}})$.

Suppose for contradiction that there exists some $i$ such that $\bar{h}'(\bar{X}^{p^i}) = 0$.  
Then $\bar{h}$ has a multiple root at $\bar{X}^{p^i}$.  
Since $\bar{h} = f \cdot \frac{\bar{h}}{f}$ and $f$ has a simple root at $\bar{X}^{p^i}$,  
it must be that $\bar{h}/f$ also vanishes at $\bar{X}^{p^i}$.  
As $\bar{h}/f$ has coefficients in $\mathbb{F}_p$, this implies that $f$ divides $\bar{h}/f$,  
which contradicts the assumption that $\bar{h}$ is not divisible by $f^2$.

By applying Corollary~\ref{HLGeneral} with $A = A$, $h = h(Y)$, $f = f(Y)$, and $g = \bar{h}(Y)/f(Y)$,  
the lifted polynomial of $f$ must coincide with $\tilde{f}$ by uniqueness of lifting.  
Furthermore, since $(Y - X_0')(Y - X_1') \cdots (Y - X_{\deg f - 1}')$ agrees with $\tilde{f}$ modulo $pA[Y]$,  
we conclude that $\tilde{f} = (Y - X_0')(Y - X_1') \cdots (Y - X_{\deg f - 1}')$.
\end{proof}

The above discussion applies not only to $\Phi_s$, but to the lifting of roots of arbitrary polynomials.  
However, when we restrict our attention to the lifting of roots of $\Phi_s$, the following observation leads to further acceleration:

\begin{quote}
Instead of lifting roots of $\Phi_s$, we may lift roots of $X^s - 1$.  
Since $X^s - 1$ is a sparse polynomial, this results in a dramatic reduction in computational cost.  
In other words, given an approximation $\alpha$ to a root of $\Phi_s$,  
one Hensel lifting step can be carried out simply by computing $\alpha^s - 1$ and $1 / (s \alpha^{s - 1})$ via repeated squaring.
\end{quote}

Before presenting the algorithm, we state the following lemma, which is closely related to the above remark.

\begin{lem}
Let $A$ be a DVR with maximal ideal $\mathfrak{m}$, and let $\alpha \in A$, $k \in \mathbb{Z}_{>0}$, and $s \in \mathbb{Z}_{>0}$ such that $\alpha^s \equiv 1 \pmod{\mathfrak{m}^k}$.  
Then the following holds:
\[
\frac{1}{\alpha^{s - 1}} \equiv \alpha (2 - \alpha^s) \pmod{\mathfrak{m}^{2k}}.
\]
\end{lem}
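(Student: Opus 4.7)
The plan is to verify the claimed congruence by multiplying both sides by $\alpha^{s-1}$ and showing the product is congruent to $1$ modulo $\mathfrak{m}^{2k}$. Note first that $\alpha$ must be a unit in $A$: since $\alpha^s \equiv 1 \pmod{\mathfrak{m}}$, the image of $\alpha$ in $A/\mathfrak{m}$ is nonzero, so $\alpha \notin \mathfrak{m}$. Hence $\alpha^{s-1}$ is a unit, and to prove the congruence it suffices to show
\[
\alpha^{s-1} \cdot \alpha(2 - \alpha^s) \equiv 1 \pmod{\mathfrak{m}^{2k}}.
\]

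The left-hand side equals $\alpha^s(2 - \alpha^s) = 2\alpha^s - (\alpha^s)^2$. Writing $\alpha^s = 1 + \varepsilon$ with $\varepsilon \in \mathfrak{m}^k$ (which is possible by hypothesis), a direct expansion gives
\[
2\alpha^s - (\alpha^s)^2 = 2(1 + \varepsilon) - (1 + \varepsilon)^2 = 1 - \varepsilon^2.
\]
Since $\varepsilon \in \mathfrak{m}^k$, we have $\varepsilon^2 \in \mathfrak{m}^{2k}$, and therefore $2\alpha^s - (\alpha^s)^2 \equiv 1 \pmod{\mathfrak{m}^{2k}}$, which gives the desired congruence.

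There is no real obstacle here; the only minor point to observe is that $\alpha$ must be a unit so that $1/\alpha^{s-1}$ makes sense, and this is immediate from $\alpha^s \equiv 1 \pmod{\mathfrak{m}}$. The underlying observation is the familiar Newton-iteration identity: $x \mapsto x(2 - ax)$ doubles the $\mathfrak{m}$-adic precision of an approximate inverse to $a$, applied here with $a = \alpha^{s-1}$ and initial approximation $x = \alpha$, using that $\alpha \cdot \alpha^{s-1} = \alpha^s$ is already within $\mathfrak{m}^k$ of $1$.
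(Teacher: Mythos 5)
Your proof is correct and is essentially the paper's argument made explicit: the paper simply observes that $\alpha \cdot \alpha^{s-1} \equiv 1 \pmod{\mathfrak{m}^k}$ and invokes one step of Newton's method for inversion, which is exactly the identity $x \mapsto x(2-ax)$ you verify by writing $\alpha^s = 1+\varepsilon$ and computing $(1+\varepsilon)(1-\varepsilon) = 1-\varepsilon^2 \in 1 + \mathfrak{m}^{2k}$. Your added remark that $\alpha$ is a unit (so $1/\alpha^{s-1}$ makes sense) is a worthwhile detail the paper leaves implicit.
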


\begin{proof}
Since $\alpha \alpha^{s - 1} \equiv 1 \pmod{\mathfrak{m}^k}$,  
computing the inverse of $\alpha^{s - 1}$ to precision $\mathfrak{m}^{2k}$  
requires applying one step of Newton's method starting from $\alpha$.  
This completes the proof.
\end{proof}

The algorithm is given below in pseudocode.


\begin{algorithm}[H]
\caption{Lifting an irreducible factor of a cyclotomic polynomial}
\label{LiftAlgo}
\begin{algorithmic}[1]
\Require{
$n \in \mathbb{Z}_{>0}$ and an irreducible factor $\bar{f}$ of a cyclotomic polynomial over $\mathbb{F}_p$
}
\Ensure{
A lift of $\bar{f}$ to $\mathbb{Q}_p$ modulo $p^{2^n}$
}
\State Find and fix $F \in \mathbb{Z}_p[X]$ such that $F \bmod p = \bar{f}$.
\State Set $A := \mathbb{Z}_p[X]/F$.
\State Let $\alpha$ be the image of $X$ in $A$.
\For{$i = 1, 2, \ldots, n$}
  \State $\alpha \gets \alpha - \frac{1}{s} (\alpha^s - 1) \alpha (2 - \alpha^s) \bmod p^{2^i}$\\
  \Comment{At the start of the $i$-th loop, we have $\frac{1}{\alpha^{s-1}} \equiv \alpha \pmod{p^{2^{i-1}}}$}
\EndFor
\State Introduce a fresh indeterminate $Y$.
\State \Return $(Y - \alpha)(Y - \alpha^p) \cdots (Y - \alpha^{p^{\deg f - 1}}) \bmod p^{2^n} \in A[Y]$
\end{algorithmic}
\end{algorithm}

We now analyze the computational complexity of this algorithm.  
First, the loop beginning on line 4 incurs a cost of $O(d^2 \log s)$ per iteration,  
resulting in a total complexity of $O(d^2 n \log s)$. 
The computation in line 7 involves calculating $\alpha^p, \alpha^{p^2}, \ldots, \alpha^{p^{\deg f - 1}}$, 
which takes approximately $O(d^3 \log p)$ time, 
and expanding the product  
$(Y - \alpha)(Y - \alpha^p) \cdots (Y - \alpha^{p^{\deg f - 1}}) \bmod p^{2^n}$ 
takes about $O(d^3 \log d)$ operations.

\begin{rem}
Since our goal is to perform FFT efficiently,  
we may carry out FFT computations directly in $\mathbb{Q}_p / F$ rather than in $\mathbb{Q}_p[X]/f$,  
where $f$ is the minimal polynomial of $\zeta_s$.  
Doing so also allows us to omit line 7 of Algorithm~\ref{LiftAlgo}.
\end{rem}

\begin{rem}
The method for computing roots of unity by applying Newton's method on $X^s-1$ is also applicable over the complex numbers.  
This fact is rarely emphasized in explanations of FFT,  
possibly because the classical method using the Maclaurin expansions of $\sin$ and $\cos$  
already provides sufficiently fast and practical performance.
\end{rem}

\section{Fast Fourier transform for polynomials over $\mathbb{Q}_p$}
Using the algorithm described in the previous section, we construct an asymptotically fast FFT over $\mathbb{Q}_p$ for a fixed prime $p$.  
By an FFT over $\mathbb{Q}_p$, we mean an algorithm that, given as input a polynomial $f \in \mathbb{Q}_p[X]$ and a positive integer $N$,  
selects an integer $s > N$ and outputs the vector $(f(1), f(\zeta_s), \ldots, f(\zeta_s^{s-1}))$.  
Our algorithm runs in time 
$O(N^{1+o(1)}) \biggl(:= \bigcap\limits_{\varepsilon > 0} O(N^{1+\varepsilon})\biggr)$. 

Given an input pair $(f, N)$, let $r$ be the smallest positive integer such that
$N < \Phi_1(p)\Phi_{p_1}(p)\Phi_{p_2}(p)\cdots\Phi_{p_r}(p)$, 
and set
$s = \Phi_1(p)\Phi_{p_1}(p)\Phi_{p_2}(p)\cdots\Phi_{p_r}(p)$. 
Then, by elementary arguments, we have the extension degree
$d := [\mathbb{Q}_p(\zeta_s) : \mathbb{Q}_p] = p_1 p_2 \cdots p_r$.
We compute $\zeta_s$ using the methods described in the previous section, and then perform the FFT.  
To factor $s$, it suffices to factor each of $\Phi_1(p), \Phi_{p_1}(p), \Phi_{p_2}(p), \ldots, \Phi_{p_r}(p)$.

We now evaluate the complexity of this FFT algorithm.  
The core FFT computation has cost
$O\bigl(d^2 s(\Phi_1(p) + \Phi_{p_1}(p) + \Phi_{p_2}(p) + \cdots + \Phi_{p_r}(p))\bigr)$, 
so it remains to show that this is $O(N^{1+o(1)})$.
By the prime number theorem, $p_i \approx i \log i$.  
Therefore, the ratio
$\frac{\Phi_{p_r}(p)}{\Phi_1(p)\Phi_{p_1}(p)\Phi_{p_2}(p)\cdots\Phi_{p_r}(p)}$
tends to zero as $N \to \infty$, implying that $\frac{s}{N} \to 1$ as $N \to \infty$.  
Furthermore, for any $\varepsilon > 0$, we can show that
$\frac{\Phi_1(p)+\Phi_{p_1}(p)+\Phi_{p_2}(p)+\cdots + \Phi_{p_r}(p)}{N^{\varepsilon}} \to 0 \quad \text{as } N \to \infty$.
Since $p_i \approx i \log i$ and the choice of $r$ satisfies $\log N \approx r^2 \log r$,  
we deduce that $r \approx \sqrt{\log N}$.  
From this, we have
$d < p_r^r \approx (r \log r)^r = \exp\left(r \log(r \log r)\right) \approx \exp\left(\sqrt{\log N} \log(\sqrt{\log N} \log \sqrt{\log N})\right)$, 
and thus
\[
\frac{d^2}{N^{\varepsilon}} \approx \exp\left(2 \sqrt{\log N} \log(\sqrt{\log N} \log \sqrt{\log N}) - \varepsilon \log N\right),
\]
which tends to zero as $N \to \infty$.
Therefore, the total complexity
$d^2 s(\Phi_1(p) + \Phi_{p_1}(p) + \Phi_{p_2}(p) + \cdots + \Phi_{p_r}(p))
\in O(N^{1+o(1)})$
as claimed.


\end{document}